\newcommand{\EQ}{\begin{equation}}
\newcommand{\EN}{\end{equation}}
\newtheorem{theorem}{Theorem}[section]
\newtheorem{corollary}[theorem]{Corollary}
\newtheorem{proposition}[theorem]{Proposition}
\newtheorem{lemma}[theorem]{Lemma}
\newtheorem{ex}[theorem]{Example}
\newcommand{\pr}{\indent{\em Proof: \ }}
\newcommand{\qed}{\hfill$\triangle$\bigskip}
\newcommand{\F}{{\mathbb{F}}}
\newcommand{\C}{{\cal C}}
\newcommand{\zero}{{\mathbf{0}}}
\newenvironment{proof}{\noindent {\pr}\ }{\qed}
\newcommand{\cH}{{\cal H}}
\newcommand{\supp}{\hbox{supp}}
\newcommand{\APN}{{\textrm{APN }}}
\newcommand{\CCZ}{{\textrm{CCZ}}}
\newcommand{\se}{self-embedding }
\title{Self-embeddings of Hamming Steiner triple systems of small order and \APN
permutations\thanks{This work was partially supported by the
Spanish MICINN under Grants MTM2009-08435 and TIN2010-17358, and
by the Catalan AGAUR under Grant 2009SGR1224. The work of the
second author was supported by the Grants RFBR 10-01-00424-a and
12-01-00631-a. \newline \indent $^1$J. Rif\`{a} and M. Villanueva
are with the Department of Information and Communications
Engineering,
                             Universitat Aut\`{o}noma de Barcelona,
                             08193-Bellaterra, Spain.
                             (email:~\{josep.rifa,
                             merce.villanueva\}@uab.cat) \newline \indent $^2$F. I. Solov'eva is with the Sobolev Institute
of Mathematics and Novosibirsk State University, Novosibirsk,
Russia. (email:~sol@math.nsc.ru)}}
\author{J. Rif\`{a}$^1$,
F. I. Solov'eva$^2$, M. Villanueva$^1$}
\begin{document}

\maketitle

\begin{abstract} The classification, up to isomorphism, of all self-embedding monomial power permutations of  Hamming Steiner triple systems of order $n=2^m-1$ for small $m$, $m \leq 22$, is given. As far as we know, for $m \in \{5,7,11,13,17,19 \}$, all given self-embeddings in  closed surfaces are new. Moreover, they are cyclic for all $m$ and nonorientable at least for all $m \leq 19$.
For any non prime $m$, the nonexistence of such self-embeddings in a closed surface is proven.

The
rotation line spectrum for self-embeddings of  Hamming Steiner triple systems in
 pseudosurfaces with pinch points as an invariant to distinguish \APN permutations or, in general, to classify permutations, is
 also proposed. This classification for \APN monomial power permutations coincides with the \textrm{CCZ}-equivalence,
 at least up to $m\leq 17$.
\end{abstract}

\section{Introduction}
\label{sec:1}

Let  $\F^n$ be the vector space of dimension  $n$ over the binary
field $\F$. The  {\it Hamming distance} between two vectors
$x,y\in \F^n$, denoted by $d(x,y)$, is the number of coordinate
positions in which $x$ and $y$ differ. The {\it Hamming weight} of
$x\in \F^n$, denoted by $w(x)$, is given by $w(x)=d(x,\zero)$,
where $\zero$ is the all-zero vector of length $n$ (it will always
be clear from the context what is the length of the vector
$\zero$). The {\it support} of $x\in \F^n$ is the set of nonzero
coordinate positions of $x$ and is denoted by $\supp(x)$.

Any nonempty subset $\C$ of $\F^n$ is a {\it binary code} and any
vector subspace of $\F^n$ is a {\it binary linear code}. The
elements of $\C$ are called {\it codewords}.
The {\it minimum distance} of $\C$, denoted by $d_\C$, is the smallest Hamming distance
between any pair of different codewords.
Let ${\cal S}_n$ be the symmetric group of permutations of length
$n$. Assume that a permutation $\pi \in  {\cal S}_n$ acts on a
vector $x=(x_1,\ldots,x_n)$ as
$\pi(x)=(x_{\pi^{-1}(1)},\ldots,x_{\pi^{-1}(n)})$. Two binary
codes $\C_1$ and $\C_2$ of length $n$ are said to be {\em
isomorphic} if there exists a coordinate  permutation $\pi \in {\cal
S}_n$ such that $\C_2=\{ \pi(x) :  x\in \C_1 \}$.  They are said
to be {\em equivalent} if there exists a vector $y\in \F^n$ and a
coordinate permutation $\pi \in {\cal S}_n$ such that $\C_2=\{
y+\pi(x) : x\in \C_1 \}$. Although the two definitions above stand
for two different concepts, it follows that two binary linear
codes are equivalent if and only if they are isomorphic \cite{Mac}.

A binary code $\C$ of length $n$ is a {\it perfect 1-error
correcting code} (briefly, {\it perfect code}) if every $x \in
\F^n$ is within distance 1 from exactly one codeword of $\C$. The
perfect codes have length $n=2^m-1$, $2^{n-m}$ codewords and
minimum distance 3. For any integer $m\geq 2$, there exists a
unique, up to equivalence, perfect
linear code of length $n=2^m-1$, called the {\it Hamming code} and
denoted by ${\cH}^n$  \cite{Mac}. Let $H_m$ be a parity check
matrix of the Hamming code ${\cH}^n$ of length $n=2^m-1$. The
columns in $H_m$ are all the nonzero vectors in $\F^m$. We can
associate to each one of them the elements in the set $N=\{1,2,\ldots,n\}$ as well as  the elements $\{\alpha^0, \alpha^1, \ldots,\alpha^{n-2}\}$, where $\alpha$ is a primitive element of the finite field $GF(2^m)$.

Let $F:\F^m \longrightarrow \F^m$ be a function
such that $F(\zero) = \zero$. The function $F$ is called \APN
({\it almost perfect nonlinear}) if all equations
\begin{equation}\label{equationAPN}
F(x) + F(x + b) = a;\,\, a, b \in \F^m; \,\, b\not=\zero,
\end{equation}
have no more than two solutions in $\F^m$. In this paper, we
consider \APN permutations, that is, when the \APN function $F:\F^m
\longrightarrow \F^m$ is bijective, so it corresponds to a
permutation $\pi_F \in {\cal S}_n$, where $n=2^m-1$.
Let $H_F$ be the matrix
\begin{equation}\label{hf}
H_F=\left(\begin{array}{c}H_m\\H_m^{(F)}
\end{array}\right) =\left( \begin{array}{ccc}
\cdots & x &\cdots\\
\cdots & F(x) &\cdots
\end{array}\right),\end{equation}
where $x\in \F^m$, $x\neq \zero$, and let $\C_F$ be the linear code
admitting $H_F$ as a parity check matrix.  Note that $\C_F$ is a
subcode of the Hamming code $\cH^n$. It is known that two functions $F, G :\F^m
\longrightarrow \F^m$, with  $rank(H_F)=rank(H_G)=2m$, are {\it CCZ-equivalent} if and only if
the extended codes $\C^*_F$ and $\C^*_G$ are equivalent. This equivalence relation has been used to
classify \APN functions, since if $F$ is an \APN function and $G$
is \textrm{CCZ}-equivalent to $F$, then $G$ is also an \APN function.

In the last years, many new \APN functions have been constructed
\cite{Const0a, Const1a, Const1b, Const3}. However, it is not
always easy to prove that they are not \textrm{CCZ}-equivalent to any of
the known ones. In order to help to distinguish them,
up to \textrm{CCZ}-equivalence, some invariants have been
defined \cite{Const0b, Const3}.

\medskip
A {\it Steiner triple system} of order $n$ (briefly $STS(n)$) is a
family of 3-element blocks (also called triples) of the set $N$
such that each unordered pair of elements of $N$ appears in
exactly  one block. A $STS(n)$ exists if and only if $n\equiv 1$
or $3\pmod 6$. It is well known that the supports of the codewords of weight 3 in any perfect code
containing the all-zero vector define a Steiner triple system.
For a Hamming code ${\cH}^n$, the
corresponding Steiner triple system is called {\it Hamming Steiner
triple system} and denoted by $STS({\cH}^n)$. Two designs are
called {\it isomorphic} if there is a permutation on the set of
points such that blocks of one design are mapped to blocks of the
other design.

The relation between combinatorial designs and graph embeddings
comes from the fact that when a graph is embedded in a surface,
the faces that results can be regarded as the blocks of a design
\cite{GrannelSurvey}. In the current paper, we consider the case
of a complete graph with $n$ vertices,  embedded into a closed
surface in which all the faces are triangles.
It is  known  \cite{Ringel} that  this complete graph
triangulates some  orientable surface if and only if
 $n\equiv 0, 3, 4$ or $7 \pmod {12}$, and triangulates some
nonorientable surface if and only if $n\equiv 0$ or $1 \pmod 3$
for $n > 7$.

A triangulation is {\it face 2-colorable}  if the triangular faces
of an embedding into a surface  can be properly 2-colored (for
example, in black and white colors),  that is,
in such a way that no two faces with a common edge have the same color.
The case of 2-colorability is of special interest because all the
triangles of the same color on the surface induce a $STS(n)$.
Therefore, we have
two $STS(n)$ (black and white) {\it biembedded} in the surface.
Such a pair of Steiner triple systems of order $n$ is called a
{\it biembedding}. If these two $STS(n)$ are isomorphic, then it
is called a {\it self-embedding}, and the corresponding
permutation is called a {\it self-embedding permutation}.

Two biembeddings are said to be {\it isomorphic} if there exists a permutation on the $n$ vertices
(of the complete graph) such that it maps edges and triangles of one biembedding
to edges and triangles of the other one \cite{Grannell2011, Grannell2002}.
Equivalently, two biembeddings are isomorphic if and only if there exists a permutation on the $n$ vertices
such that it either preserves the color of the triangles or reverse the color.
In the case when the colors of the triangles  are preserved, the isomorphism is said to be {\it color-preserving}.

For an embedding to be face 2-colorable,
$n$ must be odd because  the vertex degrees should be even.
Therefore, for an orientable case, we have that $n\equiv  3$ or $7
\pmod {12}$, and it is known \cite{Ringel,ZT} that if  a
biembedding of a surface exists, the surface should be a sphere
$S_g$ with $g=(n-4)(n-3)/12$ handles. On the other hand, for a
nonorientable case, we have that  $n\equiv 1$ or $3 \pmod 6$ for
$n > 7,$ and therefore  a biembedding of a sphere $N_\gamma$ with
$\gamma=(n-4)(n-3)/6$ crosscaps should exist
 \cite{Ringel,ZT}.

The previous ideas about biembeddings in a closed surface (the
sphere with $g$ handles or  with $\gamma$ crosscaps) can be
extended to pseudosurfaces,
see for example \cite{Kuhnel}. A {\it pseudosurface} is the
topological surface (allowing, in general, repeated triangles)
 which results when finitely many identifications of
finitely many points each, are made  on a given surface. The
points resulting after these identifications are called {\it pinch points}.
All necessary definitions and notions concerning embeddings in
closed surfaces can be found in \cite{Ringel, GrannelSurvey} and
concerning embeddings in pseudosurfaces with pinch points in \cite{Kuhnel,
GrannelSurvey}.
Throughout of what follows, when we refer to self-embeddings,
we always mean self-embeddings in a pseudosurface in general (either a closed surface or pseudosurface with pinch points),
and each time we emphasize if we just deal with a closed surface, that is, a pseudosurface without pinch points.

Despite the existence of many results devoted to
embeddings of a complete graph in a closed surface or
pseudosurface with pinch points, there still remain many unsolved
problems, see the surveys \cite{GrannelSurvey, Kuhnel}.
 For example, it is interesting to find self-embeddings in a closed surface for
the Hamming Steiner triple system $STS({\cH}^n)$ of  order
$n=2^m-1,$ $m>4$. For $n=7$, it is well known that, up to
isomorphism, the $STS({\cH}^7)$ has  only one  self-embedding, which
is a torus and, therefore, is orientable  \cite{Ringel}. For $n=15$, there are
four nonisomorphic self-embeddings of $STS(\cH^{15})$,  three of
them are nonorientable and one is orientable \cite{Grannel}.  On
the other hand, in general, it is easy to obtain self-embeddings  in a
pseudosurface just taking any two isomorphic $STS(\cH^{n})$, or in
general any two isomorphic $STS(n)$, on the same set $N$.

\medskip
In this paper, we only consider self-embeddings, in closed
surfaces and pseudosurfaces with pinch points, obtained from
the Hamming Steiner triple systems $STS({\cH}^n)$ of order $n=2^m-1$, $m>4$, via  monomial power permutations.
We restrict ourselves to
these permutations in order to develop techniques to
find new self-embeddings in  closed surfaces for these $STS({\cH}^n)$ and investigate the connection between
pseudosurfaces and \APN functions which are also monomial power
permutations.

The paper is organized as follows. In Section \ref{sec:1}, we
defined some notions of coding theory (specifically, Hamming
codes and \APN
functions), design theory (specifically, Steiner triple systems), and
graph embeddings into a surface or pseudosurface (specifically,
self-embeddings for Hamming Steiner triple systems). In Section \ref{sec:2},
we present new self-embeddings in closed
surfaces for the Hamming Steiner triple systems $STS({\cH}^n)$,
where $n=2^m-1$ and $m\in \{5,7,11,13,17,19\}$. Actually, we give
all possible self-embeddings in a closed surface constructed from a  $STS({\cH}^n)$
and considering only monomial power permutations, for all $m\leq
22$. Up to isomorphism, there are exactly 1, 1, 4, 14, 12, 65 and 88 such
self-embeddings for $m=3,5,7,11,13,17,19$, respectively.
Note that for any non prime $m$,
there are no such self-embeddings. We also point out
which of all these self-embedding permutations are \APN
permutations. In Section \ref{sec:3}, we focus on showing that the
rotation line spectrum for self-embeddings of  Hamming Steiner triple
systems in pseudosurfaces with pinch points can be used as an
invariant to classify \APN permutations. Actually,
this invariant gives a complete classification of all \APN
monomial power permutations for all $m\leq 17$, up to \textrm{CCZ}-equivalence.
Moreover, it could
be used to classify any \APN permutation, or in general, any
permutation, not necessarily \textrm{APN}.
Finally, in Section \ref{sec:5}, we
present some conclusions and further research.

\section{Self-embeddings of $STS({\cH}^n)$ in  closed surfaces}
\label{sec:2}

In this section, we  construct new self-embeddings in  closed
surfaces for Hamming Steiner triple systems $STS({\cH}^n)$, where $n=2^m-1$ and $m\in \{5,7,11,$ $13,$ $17,19\}$.
Moreover, up to isomorphism, we give all possible such self-embeddings for the $STS({\cH}^n)$, with $m\le 22$, constructed
from monomial power permutations, together with their classification.

Without loss of generality, we can consider a cyclic $STS({\cal H}^n)$
corresponding to a cyclic Hamming code ${\cH}^n$ of length $n$. It
is known that the cyclic $STS({\cal H}^n)$ is unique for every
$n=2^m-1$, up to isomorphism. A design defined on the set $N$ is
called {\it cyclic} if there is a permutation on the set $N$
consisting of a single cycle of length $n$ such that blocks are
mapped to blocks.

It is easy to see that there is only one self-embedding in a closed surface for the cyclic
$STS({\cH}^7)$ via the permutation corresponding to
the monomial power function $F(x)=x^3$ over $\F^3$. For
$n=15$, none of the four nonisomorphic self-embeddings of
$STS(\cH^{15})$ classified in \cite{Grannel} are cyclic,
so there are no self-embeddings in a closed surface for the cyclic
$STS(\cH^{15})$ given by monomial power permutations. For $n=31$,
Bennett at al. proved that there is
not any cyclic orientable self-embedding in a closed surface for the $STS(\cH^{31})$ \cite{Bennett}.
It is still an open question to determine whether there exist noncyclic orientable
self-embeddings in a closed surface  for the $STS(\cH^{31})$ or not.
In this section, we present new self-embeddings for the cyclic
$STS({\cH}^n)$ with $n=2^{m}-1$, which are cyclic for all $m$
and nonorientable at least for all $m\leq 19$.

Note that there are $STS(n)$ which are not isomorphic to the
$STS({\cal H}^n)$ but also have permutations without fixed points in their automorphism group.
For example, the $STS(n)$ given by the well known Bose construction \cite{MHall}
has an automorphism group containing a  permutation with three
short cycles of length $n/3$. An interesting fact is  that Bose $STS(15)$
can not be included in any perfect code of length $15$, see
\cite{OP}. There are several constructions of self-embeddings for
the $STS(n)$ obtained from the Bose construction, orientable and
nonorientable \cite{GrannelSurvey,Sol}.

In order to construct these mentioned new self-embeddings in a closed surface for the cyclic $STS({\cal H}^n)$,
we only consider permutations $\pi_F \in {\cal S}_n$, where $n=2^m-1$,
given by monomial power functions $F(x)=x^t$ over $\F^m$, so such that
$gcd(t,n)=1$. Therefore, since the $STS({\cal H}^n)$ is cyclic,
these constructed self-embeddings are also cyclic,
according to the next proposition.

\begin{proposition}\label{def:Cyclicity}
Let $STS({\cal H}^n)$ be cyclic and $F:\F^m \longrightarrow \F^m$
be any monomial power permutation.
Then,  $F(STS({\cal H}^n))$ is also cyclic.
\end{proposition}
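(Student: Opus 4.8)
The plan is to show that conjugating the cyclic automorphism of $STS({\cal H}^n)$ by the permutation $\pi_F$ yields a cyclic automorphism of the image system $F(STS({\cal H}^n))$. Concretely, identify the point set $N=\{1,\ldots,n\}$ with the nonzero field elements $\{\alpha^0,\alpha^1,\ldots,\alpha^{n-2}\}$ via the columns of $H_m$, as described earlier in the excerpt. Under this identification, the hypothesis that $STS({\cal H}^n)$ is cyclic means there is a permutation $\sigma$ consisting of a single $n$-cycle that maps blocks to blocks; the natural choice is the multiplication-by-$\alpha$ map $\sigma(\alpha^i)=\alpha^{i+1}$ (indices mod $n$), which indeed has a single cycle of length $n$ and preserves the Hamming triples.

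First I would write $F(x)=x^t$ with $\gcd(t,n)=1$, acting on the point set by $\pi_F(\alpha^i)=\alpha^{ti}$. The key observation is the commutation relation between $\pi_F$ and $\sigma$: since $\pi_F(\sigma(\alpha^i))=\pi_F(\alpha^{i+1})=\alpha^{t(i+1)}=\alpha^{ti+t}=\sigma^t(\pi_F(\alpha^i))$, we get the identity $\pi_F\circ\sigma=\sigma^t\circ\pi_F$, equivalently $\pi_F\circ\sigma\circ\pi_F^{-1}=\sigma^t$. Next I would set $\tau=\sigma^t=\pi_F\,\sigma\,\pi_F^{-1}$ and check that $\tau$ is an automorphism of $F(STS({\cal H}^n))$: if $B$ is a block of the image design, then $\pi_F^{-1}(B)$ is a block of $STS({\cal H}^n)$, so $\sigma(\pi_F^{-1}(B))$ is again a block of $STS({\cal H}^n)$, and applying $\pi_F$ shows $\tau(B)=\pi_F(\sigma(\pi_F^{-1}(B)))$ is a block of $F(STS({\cal H}^n))$. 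Thus $\tau$ maps blocks to blocks.

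It then remains to verify that $\tau=\sigma^t$ is itself a single $n$-cycle, which is exactly where the coprimality hypothesis does its work: a power $\sigma^t$ of an $n$-cycle $\sigma$ consists of a single cycle of length $n$ if and only if $\gcd(t,n)=1$. Since $F$ is a monomial power permutation precisely when $\gcd(t,n)=1$, this condition is guaranteed, and $\tau$ is a single cycle of length $n$ fixing no point. By the definition of cyclicity recalled just before the statement, the existence of such a $\tau$ in the automorphism group of $F(STS({\cal H}^n))$ shows that this design is cyclic.

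I do not expect a serious obstacle here; the argument is essentially the standard fact that conjugation transports automorphisms and that coprime powers of a full cycle remain full cycles. The only point requiring minor care is the bookkeeping between the two labelings of $N$ (integers versus powers of $\alpha$) and the verification that the multiplication-by-$\alpha$ map genuinely realizes the cyclic structure of $STS({\cal H}^n)$ — but this is guaranteed by the stated uniqueness and cyclicity of $STS({\cal H}^n)$, so I would simply invoke it rather than reprove it.
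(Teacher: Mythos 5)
Your proof is correct and is essentially the paper's argument in slightly more formal dress: the paper directly computes that the images of the cyclically shifted triples are $(\alpha^{(i+l)t},\alpha^{(j+l)t},\alpha^{(k+l)t})$, which is exactly your statement that $\pi_F\,\sigma\,\pi_F^{-1}=\sigma^t$ is an automorphism of $F(STS({\cal H}^n))$, with $\gcd(t,n)=1$ guaranteeing it is a single $n$-cycle. No gap; the approaches coincide.
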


\begin{proof}
Let $\alpha$ be a primitive element of the finite field $GF(2^m)$.
For any triple $(\alpha^i, \alpha^j, \alpha^k)$ from the cyclic
$STS({\cal H}^n)$ corresponding to the Hamming code with
parity check matrix $H_m=(1 \, \alpha \, \alpha^2 \, \ldots \, \alpha^{n-1})$,
we have $F(\alpha^i, \alpha^j, \alpha^k)=(\alpha^{it},
\alpha^{jt}, \alpha^{kt})\in F(STS({\cal H}^n)),$  where $F(x)=x^t$. Since $gcd(t,n)=1$,
all the numbers $it,(i+1)t, \ldots, (i+n-1)t$ are different and
run through the set $N=\{1,2,\ldots,n\}$. Therefore, applying the monomial power permutation $F$  to all
different triples from the cyclic $STS({\cal H}^n)$, we obtain the
different triples $(\alpha^{(i+l)t}, \alpha^{(j+l)t},
\alpha^{(k+l)t})\in F(STS({\cal H}^n))$, where $l\in
\{0,1,\ldots,n-1\}$. Hence, $F(STS({\cal H}^n))$ is cyclic.
\end{proof}

\medskip
In general, a biembedding in a pseudosurface has a pinch point if and only if
there is a point $i\in N$ such that the cyclically ordered points
of all triples containing $i$ in both black and white $STS(n)$,
with the ordering determined by the biembedding, can be divided
into more than one cycle. Each one of these cycles is called {\it
rotation line} at point $i\in N$. Note that a biembedding in a closed
surface has no pinch points, so the rotation line at each point
contains a single cycle of length $n-1$. We collect all
rotation lines at point $i\in N$ taking them in any order. The
number of rotation lines at point $i\in N$ will be denoted by
$rl(i)$. A biembedding in a closed surface can be considered as a
biembedding in a pseudosurface such that $rl(i)=1$ for any $i\in
N.$ The set of rotation lines at all the points of $N$ is called
the {\it rotation scheme} for the biembedding.

\medskip
The next proposition gives us an alternative definition for a
self-embedding permutation in a closed surface for a $STS({\cal H}^n)$.
Given a $STS({\cal H}^n)$, where $n=2^m-1$, for all $a,b \in \F^m
\backslash \{\zero \}$ with $a \not = b$, we have $a + b = c$ if
$(a,b,c)$ is a triple in $STS({\cal H}^n)$. Note that, from now on,
we will use  indistinctly the vectors in $\F^m \backslash \{\zero \}$
as elements (points) of the $STS({\cal H}^n)$ and vice versa.

\begin{proposition}\label{def:SE}
Let $F$ be any bijective function over $\F^m$ such
that $F(\zero) = \zero$. The permutation $F$ is a self-embedding permutation
in a closed surface for the $STS({\cal H}^n)$ if
and only if, for any $a \in \F^m \backslash \{ \zero \}$, the
elements in the sequence $a_1, a_2, \ldots, a_{2^{m-1}-1}$  are
different elements in $\F^m$, where $a_1$ is any element in $\F^m
\backslash \{ \zero \}$ such that  $a_1 \not = a$ and
$a_{i+1}=F(F^{-1}(a)+F^{-1}(a+a_i))$ for all $i\in\{1,\ldots,
2^{m-1}-2\}$.
\end{proposition}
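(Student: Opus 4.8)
The plan is to reduce the closed-surface condition at each vertex to the cycle structure of a single permutation built from two fixed-point-free involutions. First I would fix a point $a\in\F^m\setminus\{\zero\}$ and work on the set $S=\F^m\setminus\{\zero,a\}$ of its $n-1$ neighbours. Taking $STS(\cH^n)$ as the black system and $F(STS(\cH^n))$ as the white one, the two are isomorphic through $F$, so any biembedding produced this way is automatically a self-embedding, and it lies in a closed surface exactly when it has no pinch points. Around the edge $\{a,x\}$ the black triangle closes at $B_a(x)=a+x$ and the white triangle closes at $W_a(x)=F(F^{-1}(a)+F^{-1}(x))$; a direct check (using $F(\zero)=\zero$ and $a\neq\zero$) shows that $B_a$ and $W_a$ are fixed-point-free involutions of $S$. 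The recursion in the statement is then precisely $a_{i+1}=W_a(B_a(a_i))=T(a_i)$, where $T=W_a\circ B_a$, so that $a_i=T^{\,i-1}(a_1)$.

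Next I would describe the rotation line at $a$ as a cycle of the $2$-regular graph $G$ on $S$ whose edges are the black pairs $\{x,B_a(x)\}$ and the white pairs $\{x,W_a(x)\}$. Walking around $a$ from $a_1$ gives $a_1,\;B_a(a_1),\;T(a_1),\;B_a(T(a_1)),\;T^2(a_1),\dots$, whose even positions $T^k(a_1)$ are the $T$-orbit of $a_1$ and whose odd positions are its image under $B_a$. By the framework recalled in the excerpt, $F$ yields a closed-surface self-embedding if and only if, for every $a$, this graph is a single cycle covering all $n-1$ neighbours.

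The heart of the matter, and the step I expect to be the main obstacle, is to show that the cycle through $a_1$ has length exactly $2\ell$, where $\ell$ is the size of the $T$-orbit $O$ of $a_1$; equivalently, that $O$ and its translate $B_a(O)$ are disjoint. I would argue this by parity. Since $B_a\,T\,B_a=T^{-1}$, the set $B_a(O)$ is again a $T$-orbit of size $\ell$. If one had $B_a(O)=O$, then $W_a=T\circ B_a$ would also stabilise $O$, so $G$ would restrict to a single alternating cycle on $O$ of length $\ell$; this length is even because the colours alternate, yet the even positions $T^k(a_1)$ must run through all $\ell$ elements of $O$, which forces $k\mapsto 2k \pmod{\ell}$ to be a bijection and hence $\ell$ odd --- a contradiction. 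Therefore $B_a(O)\cap O=\emptyset$, the cycle genuinely has length $2\ell$, and in particular $2\ell\le|S|=n-1$, i.e. $\ell\le 2^{m-1}-1$.

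Finally I would assemble the equivalence. The entries $a_1,\dots,a_{2^{m-1}-1}$ equal $a_1,T(a_1),\dots,T^{2^{m-1}-2}(a_1)$, hence are pairwise distinct if and only if $\ell\ge 2^{m-1}-1$; combined with $\ell\le 2^{m-1}-1$ this gives $\ell=2^{m-1}-1$, so the cycle through $a_1$ has length $n-1$ and $a$ has a single rotation line. Conversely, if $a$ carries more than one rotation line then, since any second cycle (a degenerate digon $B_a(x)=W_a(x)$ giving the case $\ell=1$) uses at least two further neighbours, every cycle length is at most $n-3$ and every $T$-orbit has size at most $2^{m-1}-2<2^{m-1}-1$; thus no starting point $a_1$ produces a distinct sequence, which also shows the condition does not depend on the choice of $a_1$. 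Ranging over all $a\in\F^m\setminus\{\zero\}$ yields that $F$ is a self-embedding permutation in a closed surface for $STS(\cH^n)$ if and only if the stated distinctness holds for every $a$.
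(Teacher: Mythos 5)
Your proof is correct and follows the same route as the paper's: you identify the recursion $a_{i+1}=F(F^{-1}(a)+F^{-1}(a+a_i))$ with the alternating black/white walk around the vertex $a$ (the rotation line), so that the closed-surface condition becomes the requirement that this walk is a single rotation line through all $2^m-2$ neighbours of $a$. The paper's own proof is much terser -- it only exhibits the rotation line and derives the recursion -- and does not spell out the point you rightly isolate as the main obstacle, namely that the $T$-orbit $O$ of $a_1$ and its translate $B_a(O)$ are automatically disjoint (so that distinctness of the $a_i$ alone already forces the full rotation line to have $2^m-2$ distinct entries); your parity argument for that step, and your verification that the condition is independent of the choice of $a_1$, are valid and fill in details the paper leaves implicit.
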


\begin{proof}
Given the self-embedding permutation $F$ in a closed surface, the rotation line at any
element $a\in \F^m \backslash \{ \zero \}$ is a sequence of
$2^m-2$ different elements:
$$ [a_1,a+a_1;a_2,a+a_2; \ldots; a_{2^{m-1}-1}, a + a_{2^{m-1}-1}], $$
where $a_1$ is any element in $\F^m \backslash \{ \zero \}$ such
that $a_1 \not = a$, and $(a,a_i, a+a_i)$ is a triple in
$STS({\cal H}^n)$ for all $i\in\{1,\ldots, 2^{m-1}-1\}$. Note that
$F(STS({\cal H}^n))$ is a Steiner triple system isomorphic to
$STS({\cal H}^n)$. Moreover, the blocks in $F(STS({\cal H}^n))$
can be seen as the triples $(a, b, F(F^{-1}(a)+F^{-1}(b)))$, where
$+$ stands for the operation defined above for the $STS({\cal
H}^n)$. Therefore, for all $i\in\{1,\ldots, 2^{m-1}-2\}$, taking
$b=a+a_i$, we have that $a_{i+1}=F(F^{-1}(a)+F^{-1}(a+a_i))$.
\end{proof}

We have used Proposition \ref{def:SE} to find new self-embedding
permutations in closed surfaces for the cyclic  $STS({\cal H}^n)$, where $n=2^m-1$ with $m\leq 22$.
Note that,  considering permutations $\pi_F \in {\cal S}_n$
given by a monomial power function $F(x)=x^t$ over $\F^m$ such that
$gcd(t,n)=1$, it is only necessary to check the
condition for just one element $a \in \F^m \backslash \{\zero\}$.

\medskip
For any self-embedding of  $STS({\cal H}^{n})$ given by a permutation
$F:\F^m \longrightarrow \F^m$ with $F(\zero)=\zero$,
and any element $a\in \F^m \backslash \{ \zero \}$, we can
construct the sequence
  $a_1,a_2,\ldots, a_{r_1}$ beginning with any
element $a_1 \in \F^m \backslash \{ \zero \}$ such that $a_1 \not
= a$ and $a_{r_1+1}=a_1$, where
\begin{equation}\label{def:a_i+1}a_{i+1}=F(F^{-1}(a)+F^{-1}(a+a_i)).\end{equation}
Let $b_i$ be the element in
$\F^m \backslash \{ \zero \}$ such that $(a,a_i,b_i)$ is a triple
for all $i\in \{1,\ldots,r_1\}$. Then, the sequence
$R_1=[a_1,b_1;a_2,b_2; \ldots; a_{r_1},b_{r_1}]$ define a
rotation line at point $a$. If the rotation line $R_1$ does not
cover all elements in $\F^m \backslash \{ \zero \}$, we take an
element out of the rotation line and construct another rotation
line $R_2$ beginning with this point, and so on. Finally, we obtain a
partition of all elements in $\F^m \backslash \{ \zero, a \}$ in
different rotation lines $R_1, R_2, \ldots, R_s$, where $s=rl(a)$.
We simplify this information considering only the number of
rotation lines and the cardinal of each one of them. The
\textit{rotation line spectrum} at point $a$ will be the array
\begin{equation}
(s; rl(a)_1, rl(a)_2,\ldots , rl(a)_s),
\end{equation}
where $s=rl(a)$ is the number of rotation lines at point $a$, and
$rl(a)_i=|R_i|$  for all $i\in \{1,\ldots,s\}$.
Note that the rotation line spectrum of a self-embedding permutation in a closed surface for the
$STS({\cal H}^n)$  is $(1;n-1)$, where $n=2^m-1$.

\begin{ex} \label{ExamplesRotationLines}
For $m=5$, consider the cyclic $STS({\cal H}^{31})$ corresponding to the
cyclic Hamming code with parity check matrix $H_5=(1 \ \alpha \ \alpha^2 \ \ldots \ \alpha^{30} )$,
where $\alpha$ is a primitive element of the finite field $GF(32)=\F[x]/(x^5+x^2+1)$.

The permutation $\pi_F=(2,26,6)(3,20,11)(4,14,16)(5,8,21)(7,27,31)(9,$ $15,$
$10)(12,28,25)(13,22,30)(17,29,19)(18,23,24)
\in {\cal S}_{31}$, which corresponds to the bijective function $F(x)=x^5$ over $\F^5$,
is a self-embedding permutation in a closed surface for the $STS({\cal H}^{31})$.
The rotation line at point 1 is given by the sequence
\begin{equation}\label{ex:rotationlineS1}
\begin{split}
R_1=&[2,19;31,18;22,26;17,10;16,25;27,29;9,21;\\
&24,13;14,15;5,11;28,7;23,8;3,6;30,4;12,20],
\end{split}
\end{equation}
 so the rotation line spectrum is $(1;30)$.

On the other hand, the permutation  corresponding to the function $F(x)=x^3$ over $\F^5$
is a self-embedding permutation in a pseudosurface with pinch points
for the $STS({\cal H}^{31})$. Note that in this case there are two rotation lines
at point 1 given by the sequences
\begin{equation}
\begin{split}
R_1=&[2,19;5,11;17,10;3,6;9,21], \\
R_2=&[ 27, 29; 24, 13; 12, 20; 31, 18; 14, 15;
28,7; 22, 26; 16, 25; 23, 8; 30, 4],
\end{split}
\end{equation}
 so the rotation line spectrum is $(2; 10,20)$.
\end{ex}

For any self-embedding of  $STS({\cal H}^{n})$ given by a permutation
$F:\F^m \longrightarrow \F^m$ with $F(\zero)=\zero$,
we can calculate how many different values there are in the set
$\{x+F^{-1}(a+F(x)) \,:\, x\in \F^m\}$
for any $a\in \F^m\backslash \{\zero\}$. Let
\begin{equation}
v_F(a)=|\{x+F^{-1}(a+F(x)) \,:\, x\in \F^m\}|.
\end{equation}  We can also calculate the multiset $\tilde{V}_F(a)= \{ z_i : i \in \{1,\ldots, 2^{m-1}-1 \} \}$,
where $\{ (a, a_i, a+a_i) :  i \in \{1,\ldots, 2^{m-1}-1 \} \}$ is the set of all triples in $STS({\cal H}^{n})$
containing the point $a$ and $(a_i, a+a_i,z_i)$ are triples in $F(STS({\cal H}^{n}))$ for all $i \in \{1,\ldots, 2^{m-1}-1 \}$.
Let $V_F(a)$ be the set associated to $\tilde{V}_F(a)$, and let $V^*_F(a)$ be the multiset containing
the multiplicities of the different elements in $\tilde{V}_F(a)$. We denote by $x^\wedge s$ the elements in $V^*_F(a)$, understanding that we have $s$ different elements in  $\tilde{V}_F(a)$ appearing $x$ times.
In the following lemma, we give the connection
between $v_F(a)$ and $V_F(a)$, and  then we show these definitions by Example \ref{v_V}.

\begin{lemma} \label{lem:VF}
Let $F$ be any bijective function over $\F^m$ such that $F(\zero)=\zero$, and let $S=STS({\cal H}^{n})$.
If $S \cup F(S)$ is a self-embedding, then $v_F(a)=1+ |V_F(a)|$.
\end{lemma}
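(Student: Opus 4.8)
The plan is to unpack both quantities $v_F(a)$ and $V_F(a)$ in terms of the same underlying set of field elements, and then show that the difference in their cardinalities is exactly accounted for by the element $a$ itself (which contributes to $v_F(a)$ but is excluded from $V_F(a)$). First I would rewrite $v_F(a)$ by the substitution $y = F(x)$. Since $F$ is a bijection with $F(\zero)=\zero$, as $x$ ranges over all of $\F^m$ the image $y$ ranges over all of $\F^m$ as well, so
\[
v_F(a)=\bigl|\{F^{-1}(y)+F^{-1}(a+y) \,:\, y\in \F^m\}\bigr|.
\]
Separating out the two degenerate values $y=\zero$ and $y=a$ (both of which yield $F^{-1}(a)$, hence the single element $F^{-1}(a)+F^{-1}(\zero)=F^{-1}(a)$), the remaining $y$ range over $\F^m\backslash\{\zero,a\}$, i.e.\ over the pairs $\{y,a+y\}$ with $y\neq\zero,a$.

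Next I would reinterpret $V_F(a)$ the same way. For each triple $(a,a_i,a+a_i)$ in $STS({\cal H}^n)$, the corresponding triple in $F(STS({\cal H}^n))$ is $(a_i,a+a_i,z_i)$, and by the description in the proof of Proposition~\ref{def:SE} the third point is $z_i=F(F^{-1}(a_i)+F^{-1}(a+a_i))$. Setting $y=a_i$ (so that $y$ runs over $\F^m\backslash\{\zero,a\}$ as $i$ ranges, with $y$ and $a+y$ giving the same triple), the element $z_i$ is determined by $F^{-1}(z_i)=F^{-1}(a_i)+F^{-1}(a+a_i)$. Thus the set $V_F(a)=\{z_i\}$ is in bijection, via $F^{-1}$, with the set $\{F^{-1}(y)+F^{-1}(a+y):y\in\F^m\backslash\{\zero,a\}\}$. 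Because $F^{-1}$ is itself a bijection, this gives
\[
|V_F(a)|=\bigl|\{F^{-1}(y)+F^{-1}(a+y) \,:\, y\in \F^m\backslash\{\zero,a\}\}\bigr|.
\]

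Comparing the two displayed sets, they differ precisely by the value $F^{-1}(a)$ arising from $y\in\{\zero,a\}$ in the expression for $v_F(a)$. So the identity $v_F(a)=1+|V_F(a)|$ reduces to showing that this value $F^{-1}(a)$ is genuinely \emph{new}, that is, it does not already appear among the sums $F^{-1}(y)+F^{-1}(a+y)$ for $y\neq\zero,a$. \textbf{The hard part will be} exactly this disjointness check, and it is here that the self-embedding hypothesis on $S\cup F(S)$ must enter. I would argue that $F^{-1}(y)+F^{-1}(a+y)=F^{-1}(a)$ for some $y\neq\zero,a$ would force $z_i=a$ for that triple, meaning the triple $(a_i,a+a_i,a)$ lies in $F(STS({\cal H}^n))$; but $a$ already appears in the pencil of triples through $a$ in the black system $S$, and having $a$ reappear as the apex of a white triple over the edge $\{a_i,a+a_i\}$ is precisely what the self-embedding condition forbids—equivalently, it is the degeneracy that the distinctness of the sequence in Proposition~\ref{def:SE} rules out. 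Once this single exclusion is verified, the counting identity follows immediately, and I would close the proof.
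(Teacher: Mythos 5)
Your argument follows the paper's proof essentially step for step: the substitution $y=F(x)$, the identification of $V_F(a)$ with the set of sums $F^{-1}(y)+F^{-1}(a+y)$ for $y\neq\zero,a$ via the bijection $F^{-1}$ (equivalently $F^{-1}(z_i)=F^{-1}(a_i)+F^{-1}(a+a_i)$), and the observation that the value $F^{-1}(a)$ coming from $y\in\{\zero,a\}$ contributes the extra $1$. The only difference is that you supply an argument for the disjointness step ($z_i\neq a$ for all $i$, i.e.\ that $a$ cannot be the apex of the white triple over the edge $\{a_i,a+a_i\}$), a point the paper simply asserts without justification.
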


\begin{proof}
We have that $\{x+F^{-1}(a+F(x)) \,:\, x\in \F^m\}= \{F^{-1}(y)+F^{-1}(a+F(F^{-1}(y))) \,:\, y\in \F^m\}=
\{F^{-1}(y)+F^{-1}(a+y) \,:\, y\in \F^m\}$, where $y=F(x)$. Therefore, $$v_F(a)=|\{ F^{-1}(a) \} \cup  \{  F^{-1}(z_i) : i \in \{1,\ldots, 2^{m-1}-1\}|,$$ where $z_i=F( F^{-1}(a_i)+F^{-1}(a+a_i) )$ and
$a_1,a+a_1;a_2,a+a_2; \ldots; a_{2^{m-1}-1}, a + a_{2^{m-1}-1}$
is the sequence containing all the rotation lines at point $a$.
Note that regardless having a self-embedding in a closed surface or pseudosurface with pinch points,
we just consider the triples $(a_i, a+a_i,z_i)$, for all $i \in \{1,\ldots, 2^{m-1}-1\}$,
which are blocks in  $F(S)$.
Moreover, since $a\not = z_i$ for all $i \in \{1,\ldots, 2^{m-1}-1\}$ and $F$ is bijective,
$F^{-1}(a) \not =F^{-1}(z_i)$ and $v_F(a)=1+|\{  F^{-1}(z_i) : i \in \{1,\ldots, 2^{m-1}-1\}|=
1+|\{  z_i : i \in \{1,\ldots, 2^{m-1}-1\}|=1+|V_F(a)|.$
\end{proof}

\begin{ex}\label{v_V}
Consider the cyclic $STS({\cal H}^{127})$ corresponding to the
cyclic Hamming code with parity check matrix $H_7=(1 \ \alpha \ \alpha^2 \ \ldots \ \alpha^{126} )$,
where $\alpha$ is a primitive element of the finite field $GF(128)=\F[x]/(x^7+x+1)$.

For the self-embedding in a closed surface, given by the permutation
$F(x)=x^{7}$ over $\F^7$,
we have that

\smallskip
\noindent
{\small
$\tilde{V}_F(1)=\{ 109, 43, 17, 28, 56, 40, 103, 82, 64, 78, 38, 3, 52, 119, 117, 109, 27, 120,
90, 85,\\ 33, 55, 111, 79, 78, 36, 127, 28, 75, 5, 103, 110, 106, 90, 53, 112, 52,
42, 65, 109, 94, 30,\\ 28, 71, 126, 55, 22, 9, 78, 92, 84, 52, 105, 96, 103, 83,
2, 90, 60, 59, 55, 14, 124 \},$
}

\smallskip
\noindent since the rotation line at point 1
is $R_1=[2,8;91,10;  74, 79; \ldots;36,110]$ and $(2,8,109),$ $(91,10,43),$ $(74,79,17), \ldots, (36,110,124)$ are triples in $F(STS({\cal H}^{127}))$. Therefore, by Lemma \ref{lem:VF} (see also Table \ref{taula_multiplicities1}), $$v_F(1)= 1 +  |V_F(1) |=50 \quad \textrm{and} \quad  V^*_F(1)=\{ 1^\wedge 42, 3^\wedge{7} \}.$$
\end{ex}

\begin{lemma}\label{invers}
Let $F$ be any bijective function over $\F^m$ such that $F(\zero)=\zero$, and let $S=STS({\cal H}^{n})$.
If $S \cup F(S)$ is a self-embedding, then $S \cup F(S)$ and $S \cup F^{-1}(S)$  are isomorphic.
\end{lemma}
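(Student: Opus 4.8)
The plan is to exhibit the vertex permutation $F^{-1}$ itself as an explicit isomorphism between the two biembeddings, using the fact (recalled in the excerpt) that biembeddings may be identified up to reversal of the two colors. The starting observation is that $F^{-1}$, regarded as a permutation of $\F^m \backslash \{\zero\}$, carries the black system $S$ to $F^{-1}(S)$ and carries the white system $F(S)$ to $F^{-1}(F(S)) = S$. Hence, viewed purely as a relabelling of vertices, $F^{-1}$ sends the biembedding $S \cup F(S)$ to a biembedding whose two color classes are $F^{-1}(S)$ and $S$. Reversing the roles of the two colors, this is precisely the color partition of $S \cup F^{-1}(S)$, so the only thing left to check is that $F^{-1}$ also transports the rotation scheme of $S \cup F(S)$ onto the rotation scheme of $S \cup F^{-1}(S)$.

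First I would fix a point $a \in \F^m \backslash \{\zero\}$ and write the rotation line of $S \cup F(S)$ at $a$ in the form used in Proposition \ref{def:SE}, namely as the cyclic sequence of neighbours $a_1, a+a_1, a_2, a+a_2, \ldots$, in which the black triples $(a, a_i, a+a_i) \in S$ alternate with the white triples $(a, a+a_i, a_{i+1}) \in F(S)$, and $a_{i+1} = F(F^{-1}(a) + F^{-1}(a+a_i))$. Applying $F^{-1}$ sends this to the cyclic sequence $F^{-1}(a_1), F^{-1}(a+a_1), F^{-1}(a_2), \ldots$ around the point $F^{-1}(a)$; under this relabelling each black triple becomes a triple of $F^{-1}(S)$ and each white triple becomes a triple of $S$, so that after reversing the colors the triples of $S$ now play the role of the black system.

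The key step, and the main thing to verify, is that this transported-and-recoloured rotation line is exactly the one produced by the recurrence of Proposition \ref{def:SE} applied to $F^{-1}$ at the point $c = F^{-1}(a)$. Setting $c_1 = F^{-1}(a + a_1)$, the black triple through $c$ and $c_1$ is the image $F^{-1}(a, a+a_1, a_2)$ of a white triple of $S \cup F(S)$, which indeed lies in $F^{-1}(F(S)) = S$ and forces $c + c_1 = F^{-1}(a_2)$; the subsequent white triple is then the image of a black triple $(a, a_2, a+a_2) \in S$, lying in $F^{-1}(S)$, and gives the next neighbour $F^{-1}(a+a_2)$. Iterating, the recurrence $c_{j+1} = F^{-1}(F(c) + F(c+c_j))$ reproduces the transported sequence term by term, so the two rotation lines coincide. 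Because $STS({\cal H}^n)$ is cyclic and we only deal with monomial power permutations, checking the recurrence at a single point $a$ suffices, exactly as after Proposition \ref{def:SE}.

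Carrying this out for every point shows that $F^{-1}$ is a (color-reversing) isomorphism of rotation schemes, hence an isomorphism of biembeddings, and therefore $S \cup F(S)$ and $S \cup F^{-1}(S)$ are isomorphic. The only delicate point is the bookkeeping in the middle paragraph: one must be careful that the reversal of colors is matched by starting the $F^{-1}$-recurrence at the image of a white edge rather than a black one, since otherwise the black/white alternation around $c$ is shifted by one and the identification fails. Everything else is a direct consequence of the identity $F^{-1}(F(S)) = S$ together with the definition of biembedding isomorphism up to color reversal.
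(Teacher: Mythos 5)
Your proof is correct and takes essentially the same route as the paper, whose entire argument is the observation that $F$ (equivalently $F^{-1}$) is a colour-reversing isomorphism carrying the triples of $S\cup F^{-1}(S)$ to those of $S\cup F(S)$; your extra verification that the rotation scheme is transported just spells out why triple-preservation suffices. One small caveat: the appeal to cyclicity and monomial power permutations in your third paragraph is neither available (the lemma assumes only that $F$ is a bijection fixing $\zero$) nor needed, since the point-by-point recurrence argument you give already works at every $a$.
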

\begin{proof}
It is easy to check that the permutation $F$ transforms the triples from $S$ into the triples in $F(S)$ and the triples from $F^{-1}(S)$ into the triples in $S$.
\end{proof}

\begin{theorem} \label{theo:invariant}
Let $F_1, F_2$ be two bijective
functions over $\F^m$ such that $F_1(\zero)=F_2(\zero)=\zero$, and let $S=STS({\cal H}^{n})$. If  $S \cup F_1(S)$ and $S \cup F_2(S)$
 are isomorphic self-embeddings, then
$$\{ v_{F_1}(a) : a \in \F^m \backslash \{\zero\} \}=\{ v_{F_2}(a) : a \in \F^m \backslash \{\zero\} \}, \ \textrm{and}$$
$$\{ V^*_{F_1}(a) : a \in \F^m \backslash \{\zero\} \}=\{ V^*_{F_2}(a) : a \in \F^m \backslash \{\zero\} \}.$$
\end{theorem}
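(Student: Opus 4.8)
The plan is to show that, for a fixed biembedding $S\cup F(S)$, both quantities $v_F(a)$ and $V^*_F(a)$ are determined point by point by the local incidence structure of the biembedding, and then to verify that an isomorphism of biembeddings transports that local data. First I would record a purely combinatorial reading of $\tilde V_F(a)$: the black triangles through $a$ are exactly the triples $(a,a_i,a+a_i)$ of $S$, each one singling out an \emph{opposite edge} $\{a_i,a+a_i\}$, and $z_i$ is precisely the third vertex of the unique white triangle of $F(S)$ sharing that edge. Thus $\tilde V_F(a)$ is the multiset of apexes of the white triangles that are edge-adjacent to the black triangles through $a$; by Lemma \ref{lem:VF} this data already yields $v_F(a)=1+|V_F(a)|$, while the multiplicities of the distinct values give $V^*_F(a)$.

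Next I would dispose of the case where the isomorphism $\sigma$ is \emph{color-preserving}, i.e. $\sigma(S)=S$ and $\sigma(F_1(S))=F_2(S)$. Since $\sigma$ sends black triangles to black triangles, white to white, and preserves the edge--triangle incidence, a black triangle through $a$ goes to a black triangle through $\sigma(a)$, its opposite edge to the opposite edge, and the white triangle on that edge to the white triangle on the image edge. Hence $\sigma(\tilde V_{F_1}(a))=\tilde V_{F_2}(\sigma(a))$ as multisets, and since $\sigma$ is a bijection this forces $v_{F_1}(a)=v_{F_2}(\sigma(a))$ and $V^*_{F_1}(a)=V^*_{F_2}(\sigma(a))$. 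Letting $a$ range over $\F^m\setminus\{\zero\}$, so that $\sigma(a)$ does too, gives both claimed equalities of sets.

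It remains to handle a \emph{color-reversing} $\sigma$, and this is where the real obstacle lies, because the definition of $\tilde V_F(a)$ is asymmetric in the two colours: we read white apexes off black triangles. Here I would invoke Lemma \ref{invers}: the permutation $F_2$ realises a color-reversing isomorphism from $S\cup F_2^{-1}(S)$ to $S\cup F_2(S)$, so composing with $\sigma$ produces a \emph{color-preserving} isomorphism from $S\cup F_1(S)$ to $S\cup F_2^{-1}(S)$ (two colour reversals compose to a colour preservation). The previous paragraph then delivers $\{v_{F_1}(a):a\}=\{v_{F_2^{-1}}(a):a\}$ and $\{V^*_{F_1}(a):a\}=\{V^*_{F_2^{-1}}(a):a\}$.

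The final and most delicate step is to identify the spectra of $F_2$ and $F_2^{-1}$, that is, to prove $\{v_{F_2^{-1}}(a):a\}=\{v_{F_2}(a):a\}$ and the analogue for $V^*$. Equivalently, the substitution already performed in the proof of Lemma \ref{lem:VF} shows $v_F(a)=|\mathrm{Im}(D_aF^{-1})|$ with $D_aG(x)=G(x)+G(x+a)$, and $V^*_F(a)$ records the distribution of the values of $D_aF^{-1}$; so the claim is that $F$ and $F^{-1}$ share the per-direction image-size spectrum and, more finely, the differential row-profile spectrum. I would attack this through the standard bijection between the solutions of $D_aF(x)=b$ and those of $D_bF^{-1}(y)=a$, which interchanges the roles of directions and of difference values. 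The point that requires care — and that I expect to be the crux of the whole argument — is that this interchange preserves the \emph{multiset} of row profiles (equivalently, that the black/white reversal of the biembedding leaves the collection of local spectra unchanged); once this balancing is established, the colour-reversing case reduces to the colour-preserving one and the theorem follows.
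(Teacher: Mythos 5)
Your colour-preserving case is correct and is essentially the paper's own argument in combinatorial dress: the paper phrases it algebraically, observing that a triple-preserving map with $F(S)=S$ can be read as a linear transformation of $\F^m$ and then pushing each triple $(a_i,a+a_i,z_i)$ of $F_1(S)$ to the triple $(F(a_i),F(a)+F(a_i),F(z_i))$ of $F_2(S)$, which yields $\tilde V_{F_1}(a)=\tilde V_{F_2}(F(a))$ and hence the two set equalities via Lemma \ref{lem:VF}. Your ``apex of the white triangle on the opposite edge'' reading is the same computation, and your reduction of the colour-reversing case to the colour-preserving one by composing with $F_2^{-1}$ (i.e.\ by Lemma \ref{invers}) is also exactly the paper's first sentence.

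The genuine gap is that you stop there. Your argument delivers $\{v_{F_1}(a)\}=\{v_{F_2^{-1}}(a)\}$ and $\{V^*_{F_1}(a)\}=\{V^*_{F_2^{-1}}(a)\}$, and the theorem still requires identifying the spectra of $F_2^{-1}$ with those of $F_2$; you name this as ``the crux,'' sketch an attack via the bijection between solutions of $D_aF(x)=b$ and $D_bF^{-1}(y)=a$, and then conclude conditionally (``once this balancing is established\dots the theorem follows''). That conditional is not a proof: the identity $\delta_F(a,b)=\delta_{F^{-1}}(b,a)$ only says the two difference tables are transposes of one another, and what you need is that the multiset of row profiles of a table equals the multiset of its column profiles, which is false for general nonnegative integer matrices and so requires an actual argument using the extra structure present here (constant row and column sums, or the fact that both systems are isomorphic copies of $STS({\cal H}^n)$). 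To be fair, the paper itself compresses this entire issue into the single clause ``By Lemma \ref{invers}, it is enough to assume that the isomorphism is given by a function $F$ \dots such that $F(S)=S$,'' so you have located precisely the point where the published proof is most terse; but having located it, you must close it, either by proving the $F\leftrightarrow F^{-1}$ invariance of $\{v,V^*\}$ or by exhibiting a colour-preserving isomorphism directly between $S\cup F_1(S)$ and $S\cup F_2(S)$. As written, the proposal is incomplete at exactly the step you yourself flag.
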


\begin{proof}
By Lemma~\ref{invers}, it is enough to assume that the isomorphism is given by a function $F$ transforming triples into triples such that $F(S)=S$ and $F(F_1(S))=F_2(S)$. Looking at the points as vectors in $\F^m \backslash \{\zero\}$, we can consider  the function $F$ as a linear transformation on $\F^m$.

The elements in the set $V_{F_1}(a)$ are $F^{-1}_1(z_i)$, for $i \in \{1,\ldots, 2^{m-1}-1\}$, where $z_i=F_1( F^{-1}_1(a_i)+F^{-1}_1(a+a_i) )$ and $(a_i, a+a_i,z_i)$ are the triples in  $F_1(S)$. For any triple $(a_i, a+a_i,z_i)$ in $F_1(S)$, we have that $(F_2(F^{-1}_1(a_i)), F_2(F^{-1}_1(a+a_i)),F_2(F^{-1}_1(z_i)))$ is a triple in $F_2(S)$ and so, as $F_2\circ F^{-1}_1=F$, we see that $(F(a_i), F(a+a_i),F(z_i)) = (F(a_i), F(a)+F(a_i),F(z_i))$ are the corresponding triples in $F_2(S)$. Since $F$ is bijective,  we conclude that $\tilde{V}_{F_1}(a)=\tilde{V}_{F_2}(F(a))$, $V^*_{F_1}(a)=V^*_{F_2}(F(a))$  and using Lemma~\ref{lem:VF} we obtain $v_{F_1}(a)=v_{F_2}(F(a))$ for all $a \in \F^m \backslash \{\zero\}$. Therefore, the result follows.
\end{proof}

\begin{proposition} \label{prop:monomialPower_Permutation}
Let $F: \F^m \longrightarrow \F^m$  be any monomial power permutation, and let $S=STS({\cal H}^{n})$.
If  $S \cup F(S)$ is a self-embedding,
then the parameters $v_F(a)$ and $V^*_F(a)$ do not depend on the choice of $a\in \F^m
\backslash \{\zero\}$, that is, $v_{F}(a)=v_F(1)$ and
$V^*_{F}(a)= V^*_F(1)$ for all $a \in \F^m \backslash \{\zero\}.$
\end{proposition}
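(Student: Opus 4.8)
The plan is to produce, for each nonzero $a$, an explicit color-preserving automorphism of the self-embedding $S\cup F(S)$ that carries the point $a$ to the point $1$, and to track how the third-color values attached to the triples through $a$ transform under it. First I would record the elementary algebra of the monomial. Writing $F(x)=x^t$ with $\gcd(t,2^m-1)=1$ and letting $s$ be the inverse of $t$ modulo $2^m-1$, we have $F^{-1}(y)=y^s$, together with the homogeneity relations $F(cx)=c^tF(x)$ and $F^{-1}(cy)=c^sF^{-1}(y)$ for every nonzero $c\in\F^m$, and the identity $c^{st}=c$ (the nonzero elements forming a group of order $2^m-1$).

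The heart of the argument is the claim that, for a fixed nonzero $a$, the scaling map $\phi_a\colon x\mapsto a^{-1}x$ is a color-preserving automorphism of $S\cup F(S)$. Since multiplication by $a^{-1}$ is an $\F$-linear bijection of $\F^m$ fixing $\zero$, and a triple of $S$ is exactly an unordered set with $u+v+w=\zero$, the map $\phi_a$ sends $S$ to $S$. The nontrivial point, which I expect to be the main obstacle, is that $\phi_a$ also preserves the second color class $F(S)$. Recalling that a block of $F(S)$ has the form $(u,v,w)$ with $w=F(F^{-1}(u)+F^{-1}(v))$, I would apply $\phi_a$ and check, using only the homogeneity relations above, that the image $(a^{-1}u,a^{-1}v,a^{-1}w)$ again satisfies the defining relation of a block of $F(S)$: indeed $F\!\left(F^{-1}(a^{-1}u)+F^{-1}(a^{-1}v)\right)=F\!\left(a^{-s}(F^{-1}(u)+F^{-1}(v))\right)=a^{-st}w=a^{-1}w$. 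It is precisely this computation that uses the monomial (homogeneous) nature of $F$; for a general bijection the scaling $\phi_a$ would not respect $F(S)$.

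With the claim in hand, $\phi_a$ induces a bijection between the triples of $S$ through $a$ and those through $1$, namely $(a,a_i,a+a_i)\mapsto(1,a^{-1}a_i,1+a^{-1}a_i)$. I would then show that the associated $F(S)$-values transform by a single scaling: if $z_i=F(F^{-1}(a_i)+F^{-1}(a+a_i))$ is the value attached to the triple through $a$, the value attached to its image through $1$ is $F\!\left(a^{-s}(F^{-1}(a_i)+F^{-1}(a+a_i))\right)=a^{-1}z_i$, again by homogeneity. Hence $\tilde{V}_F(1)=\{a^{-1}z:z\in\tilde{V}_F(a)\}$ as multisets, and since $x\mapsto a^{-1}x$ is a bijection of $\F^m$ it preserves multiplicities, giving $V^*_F(a)=V^*_F(1)$ and $|V_F(a)|=|V_F(1)|$. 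By Lemma~\ref{lem:VF} this last equality yields $v_F(a)=1+|V_F(a)|=1+|V_F(1)|=v_F(1)$. As $a$ was an arbitrary nonzero element, both parameters are independent of $a$. This is exactly the pointwise content obtained in the proof of Theorem~\ref{theo:invariant} specialized to $F_1=F_2=F$ with the isomorphism $\phi_a$; alternatively, the equality $v_F(a)=v_F(1)$ alone follows at once by substituting $x=a^sy$ in the defining set $\{x+F^{-1}(a+F(x)):x\in\F^m\}$, which merely rescales it by the nonzero factor $a^s$ and hence preserves its cardinality.
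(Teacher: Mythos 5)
Your proof is correct and follows essentially the same route as the paper's: the paper's one-line argument that ``at each point $a$ the rotation line is the same, up to a permutation'' is precisely the multiplicative symmetry you make explicit via the scaling automorphism $\phi_a\colon x\mapsto a^{-1}x$ and the homogeneity of $F(x)=x^t$. Your version simply supplies the verification (that $\phi_a$ preserves both $S$ and $F(S)$ and rescales the $z_i$ by $a^{-1}$) that the paper leaves implicit.
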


\begin{proof}
If $F$ is a monomial power permutation, then at each point $a\in \F^m \backslash \{\zero\}$
the rotation line is the same, up to a permutation and so $\tilde{V}_{F}(a)$ and $\tilde{V}_{F}(1)$ also coincide, up to a permutation.
Finally, by the definitions of $v_F(a)$, $V^*_{F}(a)$ and Lemma \ref{lem:VF}, the result follows.
\end{proof}

\begin{ex}
For the self-embedding permutation in a closed surface, given by the permutation
$F(x)=x^5$ over $\F^5$ defined in Example \ref{ExamplesRotationLines},
we have that $$\tilde{V}_F(1)=V_F(1)=\{ 23, 30, 5, 12, 31, 3, 22, 16, 2, 27,24, 17, 14, 28, 9 \}$$
since the rotation line at point 1
is $R_1$ given in (\ref{ex:rotationlineS1}), and $(2,19,23),$ $(31,18,30),$ $(22,26,5), (17,10,12), (16,25,31),(27,29,3),(9,21,22)$,
$(24,13,16),$ $(14,15,\allowbreak 2)$, $(5,11,27),(28,7,24),(23,8,17),(3,6,14),(30,4,28),(12,20,9)$ are triples in $F(STS({\cal H}^{31}))$. Therefore,
$$v_F(1)= 1 +  |V_F(1) |=16 \quad \textrm{and} \quad  V^*_F(1)=\{ 1^\wedge 15 \}. $$
Finally, by Proposition \ref{prop:monomialPower_Permutation}, $v_F(a)=v_F(1)=16$ and $V^*_F(a)=V^*_F(1)=\{ 1^\wedge 15 \}$ for all $a\in \F^m \backslash \{\zero\}$.

For the self-embedding in a pseudosurface with pinch points, given by the permutation $F(x)=x^3$ over $\F^5$ defined in Example \ref{ExamplesRotationLines}, we also have that $v_F(a)=v_F(1)=16$ and $V^*_F(a)=V^*_F(1)=\{ 1^\wedge 15 \}$ for all $a\in \F^m \backslash \{\zero\}$.
\end{ex}

Note that $v_F(a)$ is maximum when all the elements in $\tilde{V}_F(a)$ are different, that is, when  $v_F(a)=2^{m-1}$.
For both permutations in the previous example, $v_F(a)$ is maximum for all $a\in \F^m \backslash \{\zero\}$. Therefore, by Proposition \ref{SE_APN} (see Section \ref{sec:3}, where we investigate the connection between \APN functions and self-embeddings), they are \APN permutations.

\medskip
By Theorem \ref{theo:invariant}, we have that the sets
$\{ v_{F}(a) : a \in \F^m \backslash \{\zero\} \}$ and $\{  V^*_F(a) : a \in \F^m \backslash \{\zero\} \}$
can be used as invariants to distinguish
nonisomorphic self-embedding permutations $F$.
By Proposition \ref{prop:monomialPower_Permutation}, note that considering  monomial
power permutations, it is only necessary to compute $v_F(a)$ and $V^*_F(a)$ for  one
element $a\in \F^m \backslash \{\zero\}$, for example $a=1$.
Let $v_F=v_F(1)$ and $V^*_F=V^*_F(1)$.
We further use these invariants to classify the found
self-embedding permutations in  closed surfaces.

\medskip
Let $C_i$ be the (binary) cyclotomic coset containing $i$, that is, the set
of integers $C_i=\{i, 2i, 4i, \ldots, 2^{m_i-1}i \}$, where $m_i$
is the smallest positive integer such that $2^{m_i} \cdot i \equiv
i \ (\textrm{mod} \ 2^m-1)$ \cite{Mac}.
The cyclotomic cosets give a partition of the integers modulo $2^m-1$
into disjoint subsets. Let $C_i^*$ be the union of the cyclotomic
coset containing $i$ and the cyclotomic coset containing the
multiplicative inverse of $i$ modulo $2^m-1$. Note that in some
cases the set $C_i^*$ coincides with $C_i$,
for example, $C_1^*=C_1=\{1, 2, 4, \ldots, 2^{m-1} \} $.

The following result demonstrates that if $t_1$ and $t_2$ are in
the same set $C_i^*$, the self-embedding permutations corresponding to
$F_1(x)=x^{t_1}$ and $F_2(x)=x^{t_2}$ are isomorphic and have the same parameters $V^*_{F_1}(a)=V^*_{F_2}(a)=V^*$ and
$v_{F_1}(a)=v_{F_2}(a)=v$, which are fixed for all $a\in \F^m \backslash \{\zero\}$.

\begin{proposition} \label{MonomialPowerPerm}
Let $F_1, F_2: \F^m \longrightarrow \F^m $ be two monomial power
permutations $F_1(x)=x^{t_1}$ and $F_2(x)=x^{t_2}$ such that $t_1,
t_2 \in C_i^*$, and let $S=STS({\cal H}^{n})$. If $S \cup F_1(S)$ and $S \cup F_2(S)$
are two self-embeddings, then they are isomorphic, $V^*_{F_1}(a)=V^*_{F_2}(a)=V^*$
and $v_{F_1}(a)=v_{F_2}(a)=v$ for all $a\in \F^m
\backslash \{\zero\}$.
\end{proposition}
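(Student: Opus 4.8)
The plan is to build an explicit isomorphism of biembeddings from $S \cup F_1(S)$ to $S \cup F_2(S)$ and then read off the equalities of the parameters from the invariance results already proved. The governing observation is that having $t_1$ and $t_2$ in a common $C_i^*$ forces a rigid relation between the two power permutations through the Frobenius automorphism $\phi(x)=x^2$ of $GF(2^m)$. First I would record that $\phi$ is $\F$-linear and fixes $\zero$, so it permutes the points $\F^m\backslash\{\zero\}$ while preserving the additive rule $a+b=c$ that defines the triples of $S=STS(\cH^{n})$; hence $\phi$, and every power $\phi^k$, is an automorphism of $S$, that is $\phi^k(S)=S$.

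Next I would split according to how $t_1,t_2$ sit inside $C_i^*=C_i\cup C_{i^{-1}}$, where $i^{-1}$ is the inverse of $i$ modulo $n$. If $t_1$ and $t_2$ lie in the same cyclotomic coset, then $t_2\equiv 2^k t_1\pmod n$ for some $k$, so $F_2(x)=x^{2^k t_1}=(x^{t_1})^{2^k}=\phi^k(F_1(x))$, i.e. $F_2=\phi^k\circ F_1$. If instead $t_1\in C_i$ and $t_2\in C_{i^{-1}}$ (or vice versa), writing $t_1=2^a i$ and $t_2=2^b i^{-1}$ gives $t_2\equiv 2^{a+b}t_1^{-1}\pmod n$; since the compositional inverse of a power permutation is again a power permutation, $F_1^{-1}(x)=x^{t_1^{-1}}$ with $t_1^{-1}$ taken modulo $n$, and therefore $F_2=\phi^{a+b}\circ F_1^{-1}$.

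The key step is a single claim: for any bijection $G$ with $G(\zero)=\zero$ for which $S\cup G(S)$ is a self-embedding, the vertex permutation $\phi^k$ is an isomorphism from $S\cup G(S)$ onto $S\cup(\phi^k\circ G)(S)$. Indeed $\phi^k$ sends the black triangles $S$ to $S$ and the white triangles $G(S)$ to $\phi^k(G(S))=(\phi^k\circ G)(S)$, so it preserves the two color classes of faces; by the color-preserving characterization of isomorphic biembeddings quoted above, this is an isomorphism, and the target is again a self-embedding. Applying this claim to the first case directly with $G=F_1$, and to the second case with $G=F_1^{-1}$ after first using Lemma~\ref{invers} to identify $S\cup F_1(S)$ with $S\cup F_1^{-1}(S)$, yields in both cases that $S\cup F_1(S)$ and $S\cup F_2(S)$ are isomorphic self-embeddings.

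Finally, Theorem~\ref{theo:invariant} gives $\{v_{F_1}(a):a\in\F^m\backslash\{\zero\}\}=\{v_{F_2}(a):a\in\F^m\backslash\{\zero\}\}$ and the analogous equality for the $V^*$ multisets, while Proposition~\ref{prop:monomialPower_Permutation} collapses each of these sets to a single value because $F_1,F_2$ are monomial power permutations. Comparing the resulting singletons then delivers $v_{F_1}(a)=v_{F_2}(a)=v$ and $V^*_{F_1}(a)=V^*_{F_2}(a)=V^*$ for all $a$. I expect the middle step to be the main obstacle: correctly translating membership in $C_i^*$ into one of the factorizations $F_2=\phi^k\circ F_1$ or $F_2=\phi^{a+b}\circ F_1^{-1}$, and verifying that $\phi^k$ is genuinely a color-preserving isomorphism of biembeddings and not merely a relabeling of the underlying triple sets, with the inverse case being precisely where Lemma~\ref{invers} becomes indispensable.
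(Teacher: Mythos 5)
Your proposal is correct and follows essentially the same route as the paper: identify the Frobenius automorphism $\phi(x)=x^2$ as an automorphism of $S$, factor $F_2$ as $\phi^k\circ F_1$ (same coset) or as a Frobenius power composed with $F_1^{-1}$ (inverse cosets, handled via Lemma~\ref{invers}), and then invoke Theorem~\ref{theo:invariant} together with Proposition~\ref{prop:monomialPower_Permutation} for the parameter equalities. Your write-up is in fact slightly more explicit than the paper's in spelling out the modular-arithmetic factorization and the role of Lemma~\ref{invers}, which the paper uses only implicitly in the inverse-coset case.
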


\begin{proof}
The Frobenious automorphisms $F(x)=x^{2^s}$ over $\F^m$, for
$s\in \{1,2,\ldots,m-1\}$, are well-known examples of permutations $\pi_F\in {\cal S}_n$ transforming
$S=STS({\cal H}^n)$ into itself. Indeed, the triples $(a,b,c)$ of
$S$ are such that $a+b=c$, where $a,b,c \in \F^m
\backslash \{\zero \}$, so $(a+b)^{2^s}=a^{2^s}+b^{2^s}=c^{2^s}$ giving that
$(F(a),F(b),F(c))$ are also triples in $S$.

Let $t_1 \in C_i$ and $t_2\in C_i$. Using a Frobenious automorphism $F(x)=x^{2^s}$ for
some $s\in \{1,2,\ldots,m-1\}$, we
have that $F_2=F\circ F_1$. Therefore, the two self-embeddings
$S \cup F_1(S)$ and $S \cup F_2(S)$ are isomorphic.

Let  $t_1\in C_i$ and $t_2\in C_j$, where $C_i$ and $C_j$ are the
inverse cyclotomic cosets such that  $C_i^*=C_i\cup C_j$.
Up to a Frobenious automorphism $F(x)=x^{2^s}$ for some $s\in
\{1,2,\ldots,m-1\}$, we can assume that $t_1$ is the multiplicative inverse
of $t_2$ modulo $2^m-1$. Hence,
$(x^{t_1})^{t_2}=x$, which means that $F_2 (F_1(x))=x$, and the
corresponding permutations $\pi_{F_1}$ and $\pi_{F_2}$ satisfy $\pi_{F_2}= \pi_{F_1}^{-1}$.
Therefore, in general, $F_2=F_1^{-1} \circ F$, and again the two self-embeddings
$S \cup F_1(S)$ and $S \cup F_2(S)$ are isomorphic.

Finally, by Theorem \ref{theo:invariant} and Proposition \ref{prop:monomialPower_Permutation},
we have that $V^*_{F_1}(a)=V^*_{F_2}(a)=V^*$ and
$v_{F_1}(a)=v_{F_2}(a)=v$ for all $a\in \F^m \backslash \{\zero\}$.
\end{proof}

\begin{proposition}\label{prime}
For any non prime $m$, there is not any self-embedding in a closed surface for the $STS({\cal H}^{n})$
 given by a monomial power permutation.
\end{proposition}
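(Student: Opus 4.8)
The plan is to exploit the subfield structure of $GF(2^m)$. Since $m$ is not prime, I can write $m = d\cdot e$ with $1 < d < m$, and consider the proper subfield $GF(2^d) \subsetneq GF(2^m)$, whose nonzero elements form a multiplicative subgroup of order $2^d-1$. I would then show that the rotation line at the point $1$ (which lies in every subfield) closes up prematurely, so that the biembedding has a pinch point at $1$ and hence cannot be realized in a closed surface. By Proposition~\ref{def:SE}, which characterizes closed-surface self-embedding permutations through the recursion $a_{i+1}=F(F^{-1}(a)+F^{-1}(a+a_i))$, it suffices to prove that the closed-surface condition fails at $a=1$.

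The key step is to observe that the rotation-line recursion preserves $GF(2^d)$. Indeed, a monomial power permutation $F(x)=x^t$ and its inverse $F^{-1}(x)=x^s$ (with $ts\equiv 1 \pmod{2^m-1}$) are power maps, so they send $GF(2^d)$ into itself; moreover $GF(2^d)$ is closed under addition and contains $1=F^{-1}(1)$. Thus, starting the recursion at $a=1$ from any $a_1\in GF(2^d)\setminus\{\zero,1\}$, a straightforward induction gives $a_{i+1}=F(F^{-1}(1)+F^{-1}(1+a_i))\in GF(2^d)$ for all $i$. Since the rotation-line elements are always different from $\zero$ and from $a=1$, in fact $a_i\in GF(2^d)\setminus\{\zero,1\}$ for every $i$.

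It then remains to count. The set $GF(2^d)\setminus\{\zero,1\}$ has exactly $2^d-2$ elements, and $2^d-2 < 2^{m-1}-1$ because $d<m$. Hence, by the pigeonhole principle, the sequence $a_1,a_2,\ldots,a_{2^{m-1}-1}$ produced from $a_1\in GF(2^d)\setminus\{\zero,1\}$ cannot consist of $2^{m-1}-1$ distinct elements: it is trapped in, and eventually periodic inside, the small set $GF(2^d)\setminus\{\zero,1\}$. By Proposition~\ref{def:SE}, this shows that $F$ is not a self-embedding permutation in a closed surface. Equivalently, the rotation line started at such an $a_1$ covers only a proper subset of $\F^m\setminus\{\zero,1\}$, so the remaining points force at least one further rotation line, giving $rl(1)\ge 2$, i.e.\ a pinch point at $1$.

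The main obstacle is the verification that the recursion genuinely remains inside $GF(2^d)\setminus\{\zero,1\}$, because it interleaves the additive operation $1+a_i$ with the multiplicative power maps $F$ and $F^{-1}$; the whole argument hinges on the fact that a subfield is \emph{simultaneously} closed under addition and under arbitrary powers, so that neither operation can leave $GF(2^d)$. Once this invariance is in place, the counting is immediate. Finally, I note that it is enough to treat the single point $a=1$: it belongs to every subfield, and by Proposition~\ref{prop:monomialPower_Permutation} the rotation-line behaviour of a monomial power permutation is identical at every nonzero point.
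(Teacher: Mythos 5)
Your proof is correct and follows essentially the same route as the paper's: both arguments trap the rotation line at the point $1$ inside a proper subfield of $GF(2^m)$ (the paper phrases this via $F(K)$ for the subfield $K$ generated by $\alpha^{n/n'}$, you via $GF(2^d)$ directly, which are the same set since power maps preserve the unique subfield of each order) and then derive a contradiction by counting against Proposition~\ref{def:SE}. The only cosmetic difference is that the paper counts the full collection $\{1\}\cup\{a_i\}\cup\{1+a_i\}$ against $|F(K)|$ while you count just the $a_i$ against $|GF(2^d)\setminus\{\zero,1\}|$; both counts close the argument.
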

\begin{proof} Let us assume there is a self-embedding in a closed surface for the $STS({\cal H}^{n})$
given by a permutation $F(x)=x^t$, that is, such that $gcd(t,n)=1$, where $n=2^m-1$.
Therefore, we can construct the rotation line at point 1 as
$$[a_1,1+ a_1;a_2,1+ a_2;\ldots;a_{2^{m-1}-1},1+ a_{2^{m-1}-1}].$$

Let $m'$ be a divisor of $m$ such that $1 < m'< m$. Then, $n'= 2^{m'}-1$ divides $n=2^{m}-1$. Let $b=n/n'$ and  $\alpha$ be a primitive element in $GF(2^m)$. Hence, $\alpha^{b}$ is a primitive element in a subfield $K\subset GF(2^m)$ of $2^{m'}$ elements. Since $F$ is
a permutation, $gcd(t,n)=1$, and we have that $F(\alpha^b)=\alpha^{tb}$ generates a finite field of $2^{m'}$ elements which coincides with $F(K)\subset GF(2^m)$.

We can consider $a_1=\alpha^{tb}\in F(K)$, where $(1,a_1,1+ a_1)=(1,\alpha^{tb},1+\alpha^{tb})\in F(STS({\cal H}^{n}))$.
If $a_i \in F(K)$, then $a_{i+1}=F(F^{-1}(1)+F^{-1}(1+a_i)) \in F(K)$ by (\ref{def:a_i+1}). Therefore, we have that $a_i \in F(K)$ for all $i=1,2,\ldots,2^{m-1}-1$. By Proposition~\ref{def:SE}, these elements, together with the element $1$  and elements $1+a_i$, for all $i=1,2,\ldots,2^{m-1}-1$, are different elements in $F(K)\backslash \{\zero\}$. Since  $|F(K)|=2^{m'}$ and $m' < m$, this leads to a contradiction.
\end{proof}

\begin{theorem} \label{theo:classification}
For $m\in \{3,5,7,11,13,17,19 \}$, up to isomorphism, there are exactly
1, 1, 4, 14, 12, 65 and 88 self-embedding monomial power permutations in closed surfaces for the $STS({\cal H}^n)$, where $n=2^m-1$,
respectively. Moreover, at least for all $5 \leq m\leq 19$ these self-embeddings are nonorientable.
\end{theorem}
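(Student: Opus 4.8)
The plan is to turn the classification into a finite, structurally controlled enumeration and to settle orientability by a separate computation on the rotation scheme. First I would observe that every $m$ in the list is prime, so Proposition~\ref{prime} removes no case but confirms that no other $m\le 22$ can contribute; it therefore suffices to consider monomial power permutations $F(x)=x^t$ with $\gcd(t,2^m-1)=1$ and $t\neq 1$. By the remark following Proposition~\ref{def:SE}, for such $F$ the self-embedding condition in a closed surface need only be tested at the single point $a=1$: one builds the sequence $a_1,a_2,\dots,a_{2^{m-1}-1}$ via (\ref{def:a_i+1}) and checks that all its entries are distinct, equivalently that $rl(1)=1$. By Proposition~\ref{MonomialPowerPerm}, exponents lying in the same set $C_i^*$ yield isomorphic self-embeddings, so I would work with one representative exponent per class $C_i^*$, which sharply cuts the search space.

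For each prime $m$ I would enumerate the classes $C_i^*$ of exponents coprime to $n=2^m-1$, pick a representative, and apply the test of Proposition~\ref{def:SE} at $a=1$. The surviving representatives are exactly the monomial power permutations giving a self-embedding in a closed surface, and by Proposition~\ref{MonomialPowerPerm} each surviving class determines at most one isomorphism class. To obtain the exact counts $1,1,4,14,12,65,88$ it then remains to show that distinct surviving classes are pairwise non-isomorphic. Here I would invoke Theorem~\ref{theo:invariant}: compute the invariants $v_F$ and $V^*_F$, which by Proposition~\ref{prop:monomialPower_Permutation} may be evaluated once at $a=1$, and separate the classes by these data; whenever two classes share the same invariants I would confirm non-isomorphism (or an extra isomorphism) by a direct comparison of the biembeddings. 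Tallying the resulting isomorphism classes gives the stated numbers.

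For the orientability claim I would determine, for each surviving class, the orientability of the associated triangulation directly from its rotation scheme. Since $2^m-1\equiv 7 \pmod{12}$ for every odd $m\ge 3$, the necessary condition for an orientable triangulation is always met, so orientability cannot be excluded a priori; indeed for $m=3$ the unique self-embedding is the orientable torus. For $5\le m\le 19$ I would run the standard orientability test on the rotation system, orienting one triangle and propagating compatible orientations across shared edges and declaring the surface orientable exactly when no contradiction arises, and find that every surviving class fails it; hence all these self-embeddings are nonorientable.

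The main obstacle is computational scale and completeness rather than any single conceptual step. For $m=17$ and $m=19$ the field $GF(2^m)$ has up to about $5\times 10^5$ elements and there are many classes $C_i^*$ to screen, so one needs an efficient implementation of the test in Proposition~\ref{def:SE} and of the invariants. The delicate point is ensuring the tally is exact rather than a mere lower bound, i.e.\ that $(v_F,V^*_F)$ together with the supplementary comparisons genuinely separate all non-isomorphic classes; a secondary care is the correct implementation of the orientability propagation from the rotation scheme for the larger values of $m$.
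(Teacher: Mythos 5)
Your proposal is correct and follows essentially the same route as the paper: reduce to one representative per class $C_i^*$ via Proposition~\ref{MonomialPowerPerm}, test the closed-surface condition at the single point $a=1$ via Proposition~\ref{def:SE}, separate classes by the invariants $v_F$ and $V^*_F$ from Theorem~\ref{theo:invariant}, resolve the few ties by a direct computational comparison (the paper additionally uses the weight distributions of $\C_F$ and Proposition~\ref{SE-Codes} for one pair at $m=11$), and check nonorientability computationally from the rotation scheme.
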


\begin{proof} Using Proposition \ref{def:SE} and the {\sc Magma} software package \cite{M1},
we found all self-embedding permutations for the $STS({\cal H}^n)$ in closed surfaces,
where $n=2^m-1$ and $m\leq 19$, given by monomial power permutations, $F(x)=x^t$ over $\F^m$ such that $gcd(t,n)=1$.
We also computed the parameter $v_F(a)$ for all found self-embedding
permutations $F$ and all $a\in \F^m \backslash \{\zero\}$. By Propositions \ref{prop:monomialPower_Permutation} and \ref{MonomialPowerPerm},
we just had to compute $v_F$ for one representative element
of the set $C^*_t$.  In Appendix (Tables \ref{taulavb1} and \ref{taulavb2}), all these values are listed.
By Proposition \ref{prime}, for any non prime $m$, there is not any self-embedding monomial power permutation in a closed surface.

Since the parameter $v_F$ is an invariant,
by Theorem \ref{theo:invariant}, the self-em\-bed\-dings having different parameters $v_F$ in Tables
\ref{taulavb1} and \ref{taulavb2} are nonisomorphic.
For $m=3$ and $m=5$, this parameter gives a
complete classification, since there is only one class. It is well known that the case $F(x)=x^3$ for $m=3$ is the torus given by
$STS({\cal H}^7)$.
For $m=7$, the self-embedding permutations $F_1(x)=x^7$ and $F_2(x)=x^{21}$ have the same parameter
$v_{F_1}=v_{F_2}=50$. However, using {\sc Magma}, it is easy to check that
the corresponding self-embeddings are not isomorphic, so there are exactly 4 classes of nonisomorphic
such self-embeddings.

For the classes having the same parameter $v_F$, we computed $V^*_F$, which is
also an invariant, by Theorem \ref{theo:invariant}.
For $m=11$, the self-embedding permutations given by $F_1(x)=x^{21}$ and $F_2(x)=x^{687}$ over $\F^{11}$ have the same parameters $v_{F_1}=v_{F_2}=815$, and $V^*_{F_1}=V^*_{F_2}= \{ 1^\wedge{628}, 2^\wedge{165}, 3^\wedge{22} \}$. However, using {\sc Magma}, we checked that the weight distributions of codes $\C_{F_1}$ and $\C_{F_2}$ are different and, by Proposition~\ref{SE-Codes}, we can conclude that the two self-embeddings are nonisomorphic.
On the other hand, for the self-embedding permutations $F_1(x)=x^{73}$ and $F_2(x)=x^{165}$ over $\F^{11}$, which also have the
same parameter $v_{F_1}=v_{F_2}=826$, just using that $V^*_{F_1}\not =V^*_{F_2}$, we have that they are nonisomorphic.
Therefore, there are exactly 14 nonisomorphic such self-embeddings.
For $m=13,17$ and $19$, Tables \ref{taula_multiplicities1} and \ref{taula_multiplicities2} show the parameter $V^*_F$ for
the sets $C^*_t$ having the same parameter $v_F$ in Tables \ref{taulavb1} and \ref{taulavb2}.
Note that all classes can be distinguish, either
using just the invariant $v_F$ or using also the invariant $V^*_F$.

Finally, the computer search using the {\sc Magma} software package showed that the obtained self-embeddings are nonorientable at least for all $m\leq 19$. Therefore, the result follows.
\end{proof}

As far as we know, all found self-embedding permutations in closed surfaces given by Theorem \ref{theo:classification} are
new with the exception of the one given for $m=3$. By Proposition \ref{def:Cyclicity}, these self-embeddings are cyclic for all $m$.

Moreover, note that for every $m\in \{
3,5,7,11,17\}$, there exists a cyclotomic coset $C_i^*$ such that
for all permutations $F(x)=x^t$ with $t\in C_i^*$,
$v_F=2^{m-1}$ is maximum, so the corresponding self-embedding
permutations are also \textrm{APN}. In Table \ref{taulavb1}, we
point out these cases with $({\cdot })^{APN}$. Note that for any non prime $m$
and for $m\in \{13,19\}$ there are no \APN self-embeddings in a closed surface.

\section{Self-embeddings of $STS({\cH}^n)$ and \APN permutations}
\label{sec:3}

This section deals with \APN permutations, which can be seen as self-em\-bed\-dings permutations in a
pseudosurface without triples in common.
Given an \APN function $F$, the corresponding code
$\C_F$ has minimum distance 5. In fact, $F$ is an \APN function if
and only if $\C_F$ has minimum distance 5 \cite{CCZ}. Therefore,
since $\C_F={\cH}^n \cap \pi_F({\cH}^n)$, any \APN permutation $F$
gives two nonintersecting Hamming Steiner triple systems,
$STS({\cH}^n)$ and  $F(STS({\cH}^n))$, which can be seen as a
self-embedding in a closed surface or in a
pseudosurface with pinch points (and without triples in common).

As in the previous section, we  consider the (cyclic) Hamming Steiner triple system $STS({\cH}^n)$
and permutations  $\pi_F \in {\cal S}_n$, where $n=2^m-1$,
given by monomial power functions $F(x)=x^t$ over $\F^m$, so such that $gcd(t,n)=1$.
In this case, we show that the rotation line spectrum of the corresponding
self-embeddings in pseudosurfaces can be used as an invariant to
distinguish between classes of \APN permutations or, in general, to
classify permutations. Moreover, we see that the rotation line spectrum gives a complete
classification of monomial power permutations up to
\textrm{CCZ}-equivalence, at least for all $m\leq 17$, so we can say
that this classification coincides with the one given by the
self-embedding isomorphism. Actually, the invariants
$v_F$ and $V^*_F$ given in Section \ref{sec:2}, can be also used to
distinguish between \textrm{CCZ}-equivalent classes of monomial power
permutations, not necessarily \textrm{APN}.

The next proposition gives a characterization of the \APN permutations using the parameter $v_F(a)$, defined in the previous section for all $a\in \F^m \backslash \{\zero\}$.

\begin{proposition}\label{SE_APN}
Let $F$ be any bijective function over $\F^m$ such
that $F(\zero) = \zero$. The permutation $F$ is \APN if and only
if, for all $a\in \F^m \backslash \{\zero\}$, we have that $v_F(a)=2^{m-1}$.
\end{proposition}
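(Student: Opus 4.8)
The plan is to reinterpret $v_F(a)$ as the number of nonempty fibers of an auxiliary map and then finish by a counting argument that exploits the pairing built into the \APN equations. For a fixed $a \in \F^m \backslash \{\zero\}$, define $g_a(x) = x + F^{-1}(a+F(x))$, so that by definition $v_F(a) = |\{ g_a(x) : x \in \F^m \}|$ is exactly the size of the image of $g_a$. First I would establish the pointwise equivalence
$$ g_a(x) = c \iff F(x) + F(x+c) = a, $$
which is a direct manipulation: adding $x$ to both sides of $g_a(x)=c$, applying $F$, and using $F(\zero)=\zero$ in characteristic two. Consequently the fiber $g_a^{-1}(c)$ is precisely the solution set of the equation $F(x)+F(x+c)=a$, and $v_F(a)$ equals the number of distinct $c$ for which this equation is solvable, i.e. the number of nonempty fibers of $g_a$.

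Next I would record two elementary facts. Since $a \neq \zero$, the relation $g_a(x)=\zero$ would force $a=\zero$, so $\zero$ never lies in the image of $g_a$; hence every nonempty fiber corresponds to some $c \neq \zero$. Moreover, for $c \neq \zero$ the map $x \mapsto x+c$ is a fixed-point-free involution of $\F^m$ that preserves the equation $F(x)+F(x+c)=a$ (replacing $x$ by $x+c$ sends it to $F(x+c)+F(x)=a$). Therefore each nonempty fiber is invariant under this involution and so has even cardinality, in particular at least $2$.

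With these in hand, both directions reduce to counting the partition of $\F^m$ (which has $2^m$ elements) into the fibers of $g_a$. For the forward direction, if $F$ is \APN then each equation $F(x)+F(x+c)=a$ with $c\neq\zero$ has at most two solutions, hence by the pairing exactly $0$ or $2$; so every nonempty fiber has size exactly $2$, and their number is $2^m/2 = 2^{m-1} = v_F(a)$. For the converse, assume $v_F(a)=2^{m-1}$ for every $a \neq \zero$. Then the $2^{m-1}$ nonempty fibers of $g_a$, each of size at least $2$, partition a set of size $2^m$, which forces the inequality $2^m = \sum |\text{fiber}| \geq 2 \cdot 2^{m-1} = 2^m$ to be an equality and hence every nonempty fiber to have size exactly $2$. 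Thus each equation $F(x)+F(x+c)=a$ has at most two solutions, for all $a\neq\zero$ and all $c\neq\zero$, which is precisely the \APN condition.

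I expect the only genuine content to be the fiber identity $g_a(x)=c \iff F(x)+F(x+c)=a$; once it is in place, the remaining care is merely to check that the fibers partition all of $\F^m$ (not just $\F^m\backslash\{\zero\}$) and that the fiber over $c=\zero$ is empty, so that the equality case of the counting inequality $2\cdot 2^{m-1}\leq 2^m$ can indeed be forced.
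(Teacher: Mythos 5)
Your proof is correct and follows essentially the same route as the paper's: both rest on the equivalence $x+F^{-1}(a+F(x))=c \iff F(x)+F(x+c)=a$ and then count solutions. Your version merely makes explicit the even-fiber-size and partition arguments that the paper compresses into the phrase ``up to a multiplicity of two.''
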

\begin{proof}
Given $ a\in \F^m \backslash \{\zero\}$, assume that $v_F(a) = |\{x+F^{-1}(a+F(x)) \,:\, x\in \F^m\}|=2^{m-1}$. This means that there are
$2^{m-1}$ different values $b\in \F^m \backslash \{\zero\}$, such that the equation
\begin{equation}\label{eq:1}
x+F^{-1}(a+F(x)) =b
\end{equation}
has two solutions, and there is no solution for the other values of
$b$. Since (\ref{eq:1}) is equivalent to (\ref{equationAPN}), $F$
is an \APN permutation.

Vice versa, assume that $F$ is an \APN permutation, and the values
$x+F^{-1}(a+F(x))$ are not all different (up to a multiplicity of
two), for a fixed $a\in \F^m \backslash \{\zero\}$. Hence, there
exists a value $b\neq \zero$ such that (\ref{eq:1})  has more
than two solutions. Again, from (\ref{eq:1}), we obtain that
(\ref{equationAPN}) has more than two solutions, which contradicts
the assumption about $F$ being \textrm{APN}.
\end{proof}

The concept of \textrm{CCZ}-equivalence is not so finer than the concept
of \se equivalence as we show in the next two propositions.

\begin{proposition}\label{SE-Codes} Let $F_1$ and $F_2$ be two bijective functions over $\F^m$ such
that $F_1(\zero)=\zero$ and  $F_2(\zero)=\zero$.
If $F_1$ and $F_2$ are isomorphic self-embedding permutations for the $STS(\cH^n)$, then
the corresponding codes $\C_{F_1}$ and $\C_{F_2}$ are equivalent.
\end{proposition}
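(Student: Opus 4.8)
The plan is to exhibit an explicit coordinate permutation carrying $\C_{F_1}$ onto $\C_{F_2}$; since for binary linear codes isomorphism and equivalence coincide \cite{Mac}, this is exactly what is needed. I would use the description $\C_F=\cH^n\cap\pi_F(\cH^n)$ recorded in Section~\ref{sec:3}, where the weight-$3$ codewords of $\cH^n$ give $S$ and those of $\pi_F(\cH^n)$ give $F(S)$. An isomorphism of the self-embeddings $S\cup F_1(S)$ and $S\cup F_2(S)$ is a permutation of the $n$ points taking triples to triples that either preserves or reverses the colors. I would treat the color-preserving case first, where the isomorphism $\phi$ satisfies $\phi(S)=S$ and $\phi(F_1(S))=F_2(S)$.

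Since $\phi$ fixes $S$, it is an automorphism of $STS(\cH^n)$, and exactly as in the proof of Theorem~\ref{theo:invariant} such an automorphism is additive (the triple $\{a,b,a+b\}$ forces $\phi(a)+\phi(b)=\phi(a+b)$), hence an $\F$-linear bijection of $\F^m$. Let $\sigma_\phi$ be the coordinate permutation sending the position labelled $x$ to the position labelled $\phi(x)$, so $(\sigma_\phi w)_y=w_{\phi^{-1}(y)}$. Linearity gives $\sum_y(\sigma_\phi w)_y\,y=\phi\big(\sum_x w_x x\big)$, so $\sigma_\phi$ preserves the parity check $\sum_x w_x x=0$ and therefore fixes $\cH^n$. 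The decisive step is then the observation that a Hamming code is the $\F$-linear span of its weight-$3$ codewords, hence is determined by its Steiner triple system. Granting this, $\sigma_\phi(\pi_{F_1}(\cH^n))$ is the Hamming code whose triple system is $\phi(F_1(S))=F_2(S)$, namely $\pi_{F_2}(\cH^n)$, and intersecting with $\sigma_\phi(\cH^n)=\cH^n$ yields $\sigma_\phi(\C_{F_1})=\cH^n\cap\pi_{F_2}(\cH^n)=\C_{F_2}$.

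To reduce the color-reversing case to this one, I would first record the always-valid equivalence $\C_F\cong\C_{F^{-1}}$: the coordinate permutation $\pi_F$ carries $\C_F$ onto $\C_{F^{-1}}$, since under $y=F(x)$ the two defining conditions $\sum_x w_x x=0$ and $\sum_x w_x F(x)=0$ of $\C_F$ turn into the pair of conditions defining $\C_{F^{-1}}$, a one-line substitution. Now if $\phi$ is color-reversing, then $\phi(S)=F_2(S)$ and $\phi(F_1(S))=S$, so $F_2^{-1}\circ\phi$ fixes $S$ and sends $F_1(S)$ to $F_2^{-1}(S)$; being a composition of triple-preserving bijections fixing $S$, it is again linear, and it is a color-preserving isomorphism between $S\cup F_1(S)$ and $S\cup F_2^{-1}(S)$. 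The color-preserving case then gives $\C_{F_1}\cong\C_{F_2^{-1}}$, and the preliminary equivalence $\C_{F_2^{-1}}\cong\C_{F_2}$ closes the argument; alternatively one may pass from $F_2$ to $F_2^{-1}$ at the outset via Lemma~\ref{invers}.

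The main obstacle is the claim that a Hamming code is spanned by its weight-$3$ codewords, so that a Hamming code is recovered from its triple system. I expect to prove it by weight-reduction induction on a codeword $w\in\cH^n$ of weight $k$: if $k\ge 4$, choose two support positions $x_1,x_2$, note that $x_1+x_2\neq\zero$ so that $e_{x_1}+e_{x_2}+e_{x_1+x_2}$ is a genuine weight-$3$ codeword, and add it to $w$; the result stays in $\cH^n$ with strictly smaller weight. Because the minimum weight is $3$, iterating terminates at $\zero$, so $w$ is a sum of weight-$3$ codewords. Everything else is bookkeeping with the two parity checks defining $\C_F$.
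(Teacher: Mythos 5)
Your proof is correct, and its skeleton matches the paper's: both arguments reduce the color-reversing case to the color-preserving one (the paper via Lemma~\ref{invers}, you via the auxiliary equivalence $\C_F\cong\C_{F^{-1}}$, which amounts to the same reduction), both use that a triple-preserving bijection fixing $STS(\cH^n)$ is forced to be $\F$-linear, and both then exhibit the induced coordinate permutation as the desired code equivalence. Where you genuinely diverge is in the central identification. The paper works on the parity-check side: it writes $F\circ F_1=F_2\circ\lambda$ with $\lambda$ linear and argues that the permuted check matrix $F(H_{F_1})$ has the same row space as $H_{F_2}$, appealing to the uniqueness of the Hamming code to replace $H_m^{(F)}$ by $H_m$ and to absorb $\lambda$. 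You instead work on the generator side via $\C_F=\cH^n\cap\pi_F(\cH^n)$: you prove that a Hamming code is the $\F$-span of its weight-$3$ codewords (by the weight-reduction induction, which is sound -- adding the weight-$3$ word supported on $\{x_1,x_2,x_1+x_2\}$ lowers the weight by $1$ or $3$, and minimum distance $3$ rules out terminating at weight $1$ or $2$), so each of the two intersected Hamming codes is recovered from its triple system and $\sigma_\phi(\C_{F_1})=\C_{F_2}$ drops out. Your route is more self-contained and arguably more watertight: the paper's identification of the row spaces of the stacked matrices built from $H_m^{(F_2\lambda)}$ and $H_m^{(F_2)}$ is asserted rather tersely, whereas your span lemma makes the principle ``a Hamming code is determined by its Steiner triple system'' explicit and proves it. One cosmetic remark: in the color-reversing case, $F_2^{-1}$ by itself need not preserve the triples of $S$; what you actually need (and have) is that the composite $F_2^{-1}\circ\phi$ maps $S$ onto $S$, which is what yields its linearity.
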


\begin{proof}
By Lemma~\ref{invers}, it is enough to assume that the isomorphism is given by a function
$F$ transforming triples into triples such that $F(S)=S$ and $F(F_1(S))=F_2(S)$.
Hence, there exists a linear function $\lambda: \F^m \longrightarrow \F^m$ such that $F\circ F_1=F_2 \circ \lambda$.
We know that $H_m^{(F)}$ (and $\lambda(H_m)$) gives a different matrix than
$H_m$, but due to the uniqueness of the Hamming code, the
generated code (or the orthogonal code) is the same.  Hence
$F(H_{F_1}) = \left(\begin{array}{c}H_m^{(F)}\\H_m^{(FF_1)}
\end{array}\right) = \left(\begin{array}{c}H_m\\H_m^{(F_2\lambda)} \end{array}\right)= \left(\begin{array}{c}H_m\\H_m^{(F_2)}
\end{array}\right)=H_{F_2}$. Therefore, the codes $\C_{F_1}$  and $\C_{F_2}$ defined in
(\ref{hf}) are equivalent.
\end{proof}

\begin{corollary}\label{CCZ-DesEqu}
Any two isomorphic self-embedding  permutations for the \\$STS(\cH^n)$
 are \textrm{CCZ}-equivalent.
\end{corollary}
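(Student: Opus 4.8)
The plan is to chain Proposition~\ref{SE-Codes} with the code-theoretic characterization of \CCZ-equivalence recalled in the coding-theory part of Section~\ref{sec:1}: two functions $F_1,F_2$ with $rank(H_{F_1})=rank(H_{F_2})=2m$ are \CCZ-equivalent if and only if their extended codes $\C^*_{F_1}$ and $\C^*_{F_2}$ are equivalent. First I would invoke Proposition~\ref{SE-Codes}: since $F_1$ and $F_2$ are isomorphic self-embedding permutations for the $STS(\cH^n)$, the (non-extended) codes $\C_{F_1}$ and $\C_{F_2}$ are equivalent. This already supplies everything needed except the passage from the codes $\C_{F_i}$ to the extended codes $\C^*_{F_i}$ demanded by the characterization, so the corollary reduces to bridging that single gap.

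The key step is therefore to lift the equivalence to the extended codes. Here I would reuse the \emph{explicit} equivalence constructed in the proof of Proposition~\ref{SE-Codes}, which is realized by the coordinate permutation induced on the columns of the parity check matrix by the linear bijection $F$ of $\F^m$ coming from the self-embedding isomorphism; recall that a linear bijection of $\F^m$ fixes $\zero$ and permutes the nonzero vectors, hence permutes the coordinate positions indexed by them and sends $H_{F_1}$ to $H_{F_2}$ up to row operations. Since the canonical operation producing $\C^*_{F_i}$ from $\C_{F_i}$ (adjoining the all-ones row $(1\,\cdots\,1)$, equivalently passing to the overall parity-check extension) is invariant under any permutation of the original coordinates, the very same permutation carries $\C^*_{F_1}$ to $\C^*_{F_2}$. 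Consequently the extended codes are equivalent, and the characterization immediately yields that $F_1$ and $F_2$ are \CCZ-equivalent.

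The main obstacle I anticipate is not the lifting argument itself but the rank hypothesis $rank(H_{F_i})=2m$ under which the characterization is stated, so I would record why it holds in the relevant setting. A deficiency $rank(H_{F_i})<2m$ would force a nontrivial linear relation $L_1(x)+L_2(F_i(x))=\zero$ valid for all $x$, i.e. a component of $F_i$ agreeing with a linear form; this cannot occur for a genuinely nonlinear permutation, and in particular never for the monomial power permutations $F_i(x)=x^{t}$ with $t$ not a power of $2$ that we consider, and automatically not in the \APN case, where $\C_{F_i}$ has minimum distance $5$. Once the rank condition is dispatched this way, the two displayed facts combine with no further computation and the corollary follows.
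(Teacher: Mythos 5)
Your argument is essentially the paper's own proof: you invoke Proposition~\ref{SE-Codes} to get equivalence of $\C_{F_1}$ and $\C_{F_2}$, observe that the coordinate permutation realizing it extends harmlessly to the overall-parity coordinate so that $\C^*_{F_1}$ and $\C^*_{F_2}$ are equivalent, and then conclude via the code-theoretic characterization of \CCZ-equivalence, exactly as the paper does. Your additional discussion of the hypothesis $rank(H_{F_i})=2m$ addresses a point the paper's proof passes over silently, and while your justification of it is rigorous for monomial powers and only heuristic in general (a ``genuinely nonlinear'' permutation could still have a single linear component), this does not alter the route taken.
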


\begin{proof}
Let $F_1$ and $F_2$ be two self-embedding permutations for the same $STS(\cH^n)$.
By Proposition \ref{SE-Codes}, the codes $\C_{F_1}$  and $\C_{F_2}$ are
equivalent, so the extended codes are equivalent, too.
Then, we conclude that $F_1, F_2$ are \textrm{CCZ}-equivalent.
\end{proof}

By Corollary \ref{CCZ-DesEqu}, it is possible to use the classification given by the self-embedding isomorphism,
in order to obtain a classification given by the \textrm{CCZ}-equivalence.
Note that the inverse of this result is not true in general. For example, for $m=4$, the permutations $\pi_{F_1}=(1, 15)(2, 3)(4, 5)(6,7)(9,10)(11,\allowbreak12)\allowbreak(13, 14)$ and $\pi_{F_2}=(1, 15)(2, 9)(3, 10)(4, 11)(5, 12)(6, 13)(7, 14)$
are \CCZ-e\-qui\-va\-lent \cite{RSV-acct, RSV}, but they do not define two isomorphic self-embedding permutations,
since they have 6 and 14 pinch points, respectively.
However, we can establish a weaker result considering just monomial power permutations, given by the next proposition.

\begin{proposition}
Let $F_1$, $F_2$ be two \textrm{CCZ}-equivalent monomial power permutations.
Then, $V^*_{F_1} =V^*_{F_2}$ and $v_{F_1} =v_{F_2}$.
\end{proposition}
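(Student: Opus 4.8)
The plan is to connect \textrm{CCZ}-equivalence of monomial power permutations to the invariants $v_F$ and $V^*_F$ via the cyclotomic coset structure established in Proposition~\ref{MonomialPowerPerm}. First I would recall the known classification of \textrm{CCZ}-equivalence for monomial power functions $F(x)=x^t$ over $\F^m$: two such functions $F_1(x)=x^{t_1}$ and $F_2(x)=x^{t_2}$ are \textrm{CCZ}-equivalent precisely when $t_1$ and $t_2$ lie in the same set $C_i^*$, that is, when $t_2$ is obtained from $t_1$ by repeated multiplication by $2$ modulo $n$ (a Frobenius automorphism) and possibly by taking the multiplicative inverse modulo $n$. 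This is exactly the condition $t_1,t_2\in C_i^*$ appearing in Proposition~\ref{MonomialPowerPerm}, and it is the standard cyclic/inverse equivalence for cyclic codes and power maps.

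Granting that equivalence, the result follows almost immediately from the machinery already in place. The key step is to observe that if $F_1$ and $F_2$ are \textrm{CCZ}-equivalent monomial power permutations, then $t_1, t_2 \in C_i^*$ for some $i$. Then I would invoke Proposition~\ref{MonomialPowerPerm} directly: it asserts that whenever $t_1, t_2 \in C_i^*$ and both $S\cup F_1(S)$ and $S\cup F_2(S)$ are self-embeddings, the two self-embeddings are isomorphic and satisfy $V^*_{F_1}(a)=V^*_{F_2}(a)=V^*$ and $v_{F_1}(a)=v_{F_2}(a)=v$ for all $a\in \F^m\backslash\{\zero\}$. Taking $a=1$ and using the notation $v_F=v_F(1)$, $V^*_F=V^*_F(1)$, this gives exactly $V^*_{F_1}=V^*_{F_2}$ and $v_{F_1}=v_{F_2}$, as desired.

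The one genuine gap I would need to address is that Proposition~\ref{MonomialPowerPerm} is stated under the hypothesis that both $S\cup F_1(S)$ and $S\cup F_2(S)$ \emph{are} self-embeddings, whereas the present statement does not assume this a priori. The main obstacle is therefore to verify that the invariants $v_F$ and $V^*_F$, as well as the isomorphism argument, make sense (and the Frobenius/inverse reductions go through) even when the pair is a self-embedding only in the general pseudosurface sense rather than in a closed surface. Here I would note that the definitions of $v_F(a)$ and $V^*_F(a)$ in Section~\ref{sec:2} are given for \emph{any} bijective $F$ with $F(\zero)=\zero$ and never actually require a closed-surface self-embedding; Lemma~\ref{lem:VF} and Proposition~\ref{prop:monomialPower_Permutation} only use that $S\cup F(S)$ is a self-embedding in the general sense. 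Since any \textrm{CCZ}-equivalence class of permutations corresponds to genuine self-embeddings in a pseudosurface (as explained at the start of Section~\ref{sec:3}), the hypotheses of the reductions in Proposition~\ref{MonomialPowerPerm} are met, and the two equalities follow.

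In summary, the proof reduces to: (i) translating \textrm{CCZ}-equivalence of $F_1,F_2$ into the cyclotomic-coset condition $t_1,t_2\in C_i^*$; and (ii) applying Proposition~\ref{MonomialPowerPerm} together with Proposition~\ref{prop:monomialPower_Permutation} to conclude equality of the invariants. The hardest part is step~(i), confirming that the known classification of \textrm{CCZ}-equivalence for power maps coincides precisely with membership in a common $C_i^*$; once that identification is made, step~(ii) is a direct citation of the earlier propositions.
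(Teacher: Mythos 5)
There is a genuine gap, and it sits exactly where you locate the ``hardest part'' of your argument: step~(i), the claim that two monomial power permutations $F_1(x)=x^{t_1}$ and $F_2(x)=x^{t_2}$ are \textrm{CCZ}-equivalent \emph{precisely} when $t_1,t_2\in C_i^*$. Only one direction of that equivalence is available: membership in a common $C_i^*$ implies isomorphic self-embeddings (Proposition~\ref{MonomialPowerPerm}) and hence \textrm{CCZ}-equivalence (Corollary~\ref{CCZ-DesEqu}). The converse --- that \textrm{CCZ}-equivalence forces $t_1$ and $t_2$ into the same $C_i^*$ --- is not ``standard''; it is a deep and (certainly at the level of generality needed here, for arbitrary, not necessarily \APN, power permutations) unresolved question, and it is nowhere established in the paper. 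Indeed, the whole point of introducing $v_F$ and $V^*_F$ is to have computable invariants that help decide when power permutations are \textrm{CCZ}-inequivalent; if \textrm{CCZ}-equivalence were already known to coincide with the cyclotomic/inverse relation, the proposition (and much of the paper's motivation) would be vacuous. The paper even warns that the converse of Corollary~\ref{CCZ-DesEqu} fails in general. So your proof assumes as known essentially the statement one would like these invariants to probe, and the reduction to Proposition~\ref{MonomialPowerPerm} does not go through.

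The paper's actual argument avoids this entirely. It works directly with the equivalent extended codes $\C^*_{F_1}$ and $\C^*_{F_2}$: for $a=1$, the multiplicity of an element of $\tilde V_F$ is governed by the number $n_b$ of solutions of $F(x)+F(x+b)=1$, and each $b$ with $n_b>2$ contributes $c_b=\binom{n_b/2}{2}$ weight-$4$ codewords to $\C^*_F$. Since \textrm{CCZ}-equivalence gives a coordinate permutation matching the quadruples of $\C^*_{F_1}$ with those of $\C^*_{F_2}$, the multisets $\{c_b\}$ agree, which forces $v^{(F_1)}_i=v^{(F_2)}_i$ for $i>1$, and the normalization $\sum_i i\cdot v^{(F)}_i=2^{m-1}-1$ then pins down the $i=1$ term. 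That counting argument is the substance of the proof and is absent from your proposal; you would need to supply it (or a proof of your step~(i), which would be a significant new result) to close the gap.
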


\begin{proof}
Given a monomial power permutation $F$, by Proposition \ref{prop:monomialPower_Permutation},
we have that $v_F=v_F(a)$ and $V^*_F=V^*_F(a)$ for any $a\in \F^m\backslash \{\zero\}$,
so we can just take $a=1$. Moreover, we know that $v_{F}(1)$ is the number of values $b\not=\zero$ for which (\ref{equationAPN}) has solutions in $\F^m$ for $a=1$. Note that if $x$ is a solution of this equation, then
$x+b$ is also a solution, so the solutions come in pairs and the maximum value of $v_F$ is $2^{m-1}$, which is reached when $F$ is an \APN permutation, by Proposition \ref{SE_APN}. Hence, the permutations $F$ in all classes of
\textrm{CCZ}-equivalent monomial power permutations which are \APN satisfy that
$v_F=2^{m-1}$ and $V^*_F=\{ 1^\wedge(2^{m-1}-1)\}$. Note that if $F_1$ is an \APN function and $F_2$ is \textrm{CCZ}-equivalent to $F_2$, then $F_2$ is also an \APN function \cite{CCZ}.

If $F$ is not an \APN permutation, then~(\ref{equationAPN}) has more than a pair of solutions for some values of $b$ and $a=1$. When this happens, there is a connection with the quadruples in $\C^*_F$,
which is the extended code of $\C_F$. For example, if $x, x+b$ and $y, y+b$ are two different pairs of solutions of (\ref{equationAPN}), then the codeword given by the quadruple $(x,x+b,y,y+b)$
belongs to $\C^*_F$, since $F(x)+F(x+b)=F(y)+F(y+b)=a=1$.
Or, for instance, if $x,x+b$; $y,y+b$; and $z,z+b$ are three different pairs of solutions,
the quadruples $(x,x+b,y,y+b)$, $(x,x+b,z,z+b)$, $(y,y+b,z,z+b)$ give three codewords in $\C^*_{F}$ for the same argument.
In general, if $n_{b}$ is the number of solutions of (\ref{equationAPN}) for $b$ and $a=1$,
there are  $c_{b}=\binom{n_{b}/2}{2}$ quadruples in $\C^*_F$ associated to $b$.
Note that if $n_{b}=2$,  there is only a pair of solutions of (\ref{equationAPN}) and we have that $c_{b}=0$. Since the same quadruple is associated to $\frac{1}{2}\binom{4}{2}=3$ different values of $b$, the total number of quadruples in $\C^*_{F}$ is
 \begin{equation}\label{quad}
 \frac{2^m-1}{3}\sum_{b\in\F^m \backslash \{\zero \}} c_{b}.
 \end{equation}
Let $V^*_F=\{1^\wedge v^{(F)}_1, 2^\wedge v^{(F)}_2,\ldots\}$, where $v^{(F)}_i$ is the number of elements that appear $i$ times in
the multiset $\tilde{V}_F$. Then, $v_F-1=\sum_{i\geq 1} v^{(F)}_i$. The sum in~(\ref{quad}) has $\sum_{i>1} v^{(F)}_i$ nonzero terms corresponding to the values $b$ for which (\ref{equationAPN}) has more than a pair of solutions.

Let $F_1$ and $F_2$ be two \textrm{CCZ}-equivalent monomial power permutations, such that they are not \textrm{APN}.
Then, there exists a bijection between the codewords corresponding to the quadruples in both codes $\C^*_{F_1}$ and $\C^*_{F_2}$, given by a permutation $\pi$.
Hence, the set of $c_{b}$ quadruples in $\C^*_{F_1}$ goes to a set of $c_{{\bar b}}$ quadruples in $\C^*_{F_2}$ for an appropriate ${\bar b}$. Then, the number of quadruples in both codes $\C^*_{F_1}$ and $\C^*_{F_2}$ is the same:
$$\frac{2^m-1}{3}\sum_{b\in\F^m \backslash \{\zero \}} c_{b} = \frac{2^m-1}{3}\sum_{{\bar b}\in\F^m \backslash \{\zero \}} c_{{\bar b}},$$
where for each $b\in\F^m \backslash \{\zero \}$ there exists an appropriate ${\bar b}\in\F^m \backslash \{\zero \}$ such that $c_{b}=c_{{\bar b}}$. Not only the number of nonzero terms in the corresponding sums are the same, but also the repeated values. Hence, $v^{(F_1)}_i=v^{(F_2)}_i$ for $i>1$.
Moreover, since $\sum_{i\geq 1} i{\cdot }v^{(F_1)}_i=\sum_{i\ge 1} i{\cdot }v^{(F_2)}_i = 2^{m-1}-1$ and  $v^{(F_1)}_i=v^{(F_2)}_i$ for $i>1$, we can extend the equality for $i=1$. Therefore, we can conclude that $V^*_{F_1} =V^*_{F_2}$ and $v_{F_1} =1+\sum_{i\geq 1} v^{(F_1)}_i=1+\sum_{i\geq 1} v^{(F_1)}_i=v_{F_2}$.
 \end{proof}

It is clear that dealing with monomial power permutations, the rotation
lines at any two points
are the same up to a permutation. Therefore, it is enough to consider the rotation lines at one point,
for example, the point 1. The rotation line spectrum at point 1 can be
used to classify monomial power permutations, up to \se isomorphism, since any two isomorphic \se  permutations (regardless of they are monomial or not) have equivalent rotation schemes, so also the same rotation line spectrums up to a permutation.

For any $m\leq 17$,  Tables \ref{allPowerPermM13} and \ref{allPowerPermM15-17} show all \APN monomial
power permutations $F(x)=x^t$ over $\F^m$, taking just one representative up to
self-embedding isomorphism by Proposition \ref{MonomialPowerPerm}. For each class, the tables include the following information: the cyclotomic coset $C^*_t$, where the exponent $t$ belongs, the number of rotation lines $rl(1)$ at point $1$, and a reduced rotation line spectrum at point $1$.
For lack of space, the full rotation line spectrum is not given in these tables.
However, we describe a reduced rotation line spectrum including only the different
cardinalities of all rotation lines at point $1$,
since this is enough to distinguish all the cyclotomic classes $C_t^*$, which represent
all the nonisomorphic classes of \APN monomial power permutations.
Note that at least for all $m\leq 17$, all \APN monomial
power permutations in the same \textrm{CCZ}-equivalent class have
the same number of rotation lines, so the classification given by
the self-embedding isomorphism coincides with  the \textrm{CCZ}-equivalence.

\begin{proposition}\label{Melas}
Let $F(x)=x^{-1}$, so $\C_F$ is the Melas code. If $m$ is odd,
then in each point there are $(2^m-2)/6$ rotation lines with 6
points each. If $m$ is even, then in each point there are
$(2^m-4)/6$ rotation lines with 6 points each, and one rotation
line with 2 points.
\end{proposition}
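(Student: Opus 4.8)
The plan is to reduce everything to the point $a=1$. Since $F(x)=x^{-1}=x^{2^m-2}$ has $\gcd(2^m-2,\,2^m-1)=1$, it is a monomial power permutation, and so by Proposition~\ref{prop:monomialPower_Permutation} (more precisely, by the observation in its proof) the rotation scheme is the same at every point up to a permutation; hence it suffices to describe the rotation lines at $1$. First I would make the recurrence~(\ref{def:a_i+1}) explicit. As $F$ is an involution ($F^{-1}=F$) and $F(1)=1$, setting $a=1$ gives
\begin{equation}
a_{i+1}=F\big(F^{-1}(1)+F^{-1}(1+a_i)\big)=\big(1+(1+a_i)^{-1}\big)^{-1}.
\end{equation}
A short computation in characteristic $2$ collapses the right-hand side to $a_{i+1}=1+a_i^{-1}=:\phi(a_i)$. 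The pairing $a_i\mapsto b_i=1+a_i$ that completes each rotation line is the involution $\iota(x)=1+x$, so a rotation line at $1$ is precisely the set swept out by iterating $\phi$ and recording each $\iota$-partner.

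The key step is to recognise $\phi$ and $\iota$ as fractional-linear maps on the projective line $GF(2^m)\cup\{\infty\}$ and to identify the group they generate. Direct substitution gives $\phi^2(x)=(1+x)^{-1}$ and $\phi^3(x)=x$, so $\phi$ has order $3$; and $\iota\phi\iota=\phi^{-1}$, so $G=\langle\phi,\iota\rangle\cong S_3$. Since $\phi$ cyclically permutes $\zero\mapsto\infty\mapsto1\mapsto\zero$, the excluded points $\{\zero,1,\infty\}$ form a single $G$-orbit; as orbits partition $GF(2^m)\cup\{\infty\}$, the $G$-orbit of any $a_1\in\F^m\setminus\{\zero,1\}$ is disjoint from it and therefore stays inside $\F^m\setminus\{\zero,1\}$. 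I would then identify each rotation line at $1$ with exactly one such $G$-orbit: the $\phi$-cycle supplies the entries $a_i$ while $\iota$ supplies their partners $b_i$, so the line is the full orbit $\{a_1,\phi a_1,\phi^2a_1,\,\iota a_1,\iota\phi a_1,\iota\phi^2 a_1\}$, and the algorithm's ``take an uncovered element'' step precisely picks off the next orbit.

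It then remains to read off orbit sizes from stabilisers in $G$. Since $|G|=6$, every orbit has size $6,3,2$ or $1$. A case check of the five nonidentity elements shows the three reflections $\iota,\iota\phi,\iota\phi^2$ (equal to $x+1$, $x^{-1}$, $1+(1+x)^{-1}$) fix at most the forbidden points $\zero$ or $1$, hence fix nothing in $\F^m\setminus\{\zero,1\}$, while the order-$3$ elements $\phi,\phi^2$ fix exactly the roots of $x^2+x+1=0$. Those roots lie in $GF(2^m)$ if and only if $GF(4)\subseteq GF(2^m)$, i.e.\ if and only if $m$ is even. Thus for $m$ odd every stabiliser is trivial, all orbits have size $6$, and there are $(2^m-2)/6$ rotation lines of $6$ points. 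For $m$ even the two primitive cube roots of unity $\omega,\omega^2$ are each fixed by $\langle\phi\rangle$ but by no reflection, so each has stabiliser of order $3$ and lies in an orbit of size $2$; since $\iota(\omega)=1+\omega=\omega^2$ this orbit is the single rotation line $[\omega,\omega^2]$ of $2$ points, and the remaining $2^m-4$ points split into $(2^m-4)/6$ orbits of size $6$.

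The only real obstacle is the second paragraph: spotting that the recurrence collapses to the order-$3$ Möbius map $x\mapsto 1+x^{-1}$ and that, together with the pairing involution $x\mapsto 1+x$, it generates a copy of $S_3$. Once that structure is in hand, the size statement is a routine orbit--stabiliser count, and the required divisibilities of $2^m-2$ (for $m$ odd) and $2^m-4$ (for $m$ even) by $6$ follow from $2^m\equiv 2\pmod 3$ when $m$ is odd and $2^m\equiv 1\pmod 3$ when $m$ is even.
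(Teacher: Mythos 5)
Your proof is correct, and its computational core is the same as the paper's: both reduce the recurrence $a_{i+1}=F(F^{-1}(a)+F^{-1}(a+a_i))$ for $F(x)=x^{-1}$ to an order-$3$ map whose only possible fixed points are the roots of $x^2+x+1$, which lie in $GF(2^m)$ exactly when $m$ is even. The paper does this directly at an arbitrary point $a$ (checking that $a_2=a_1$ iff $x^2+x+1=0$ is solvable, and that otherwise $a_4=a_1$) and then simply reads off line lengths $2$ and $6$, leaving implicit that the six points of a generic line are distinct (this follows from the fact that the rotation lines partition $\F^m\setminus\{\zero,a\}$). What you add is the identification of the iteration map $\phi(x)=1+x^{-1}$ and the pairing $\iota(x)=1+x$ as generators of a copy of $S_3$ acting on the projective line, with the rotation lines as the $G$-orbits in $\F^m\setminus\{\zero,1\}$; the orbit--stabiliser count then delivers the line sizes and simultaneously verifies the distinctness of the six points and the divisibility of $2^m-2$ (resp.\ $2^m-4$) by $6$. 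This is a slightly heavier setup than the paper's two-line check, but it makes the bookkeeping airtight and explains structurally why only sizes $6$ and $2$ can occur. Your reduction to the single point $a=1$ via the monomial-power symmetry is also legitimate, though unnecessary: the same computation at a general point $a$ gives $\psi_a(x)=a+a^2x^{-1}$, again of order $3$, with fixed-point equation $x^2+ax+a^2=0$.
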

\begin{proof}
Note that $F^{-1}(x)=F(x)$, since $F^2(x)=x$ for all $x\in
\F^m\backslash \{\zero\}$. Without loss of generality, we can
consider any point $a$ as the starting point. By the arguments shown after
Proposition \ref{def:SE}, the rotation lines
$R_1, R_2,\ldots, R_s$ at point $a$,  where $s=rl(a)$,
give a partition of the $n-1=2^m-2$ elements in $\F^m \backslash \{\zero,a\}$.
Given any of the rotation lines, $R_j$, we can write it as
$$R_j=[a_{1},a+a_{1}; a_{2},a+a_{2}; \ldots; a_{r_j}, a + a_{r_j}],$$
where $a_1$ is any element in $\F^m \backslash \{ \zero \}$ such
that  $a_1 \not = a$ and $a_{i+1}=F(F^{-1}(a)+F^{-1}(a+a_i))$ for
all $i\in\{1,\ldots, r_j-1\}$. It is easy to check that
$a_2=a_1$ if and only if $x^2+x+1=0$ has solutions over $GF(2^m)$,
so if and only if $m$ is even. When $m$ is even, the equation has
two solutions and we obtain a rotation line with 2 points.
Otherwise, when $a_1\not = a_2$, then $a_4=a_1$, and the rest of
rotation lines have always 6 points.
\end{proof}

From Tables \ref{allPowerPermM13} and \ref{allPowerPermM15-17}, it can be observed that
the minimum number of points in a rotation line is $6$.
In the next proposition, we prove that this is true in general for any $m$ and any \textrm{APN} permutation.

\begin{proposition} \label{upperlowerbound} Let $F$ be any bijective function over $\F^m$ such
that $F(\zero) = \zero$. If the permutation $F$ is \textrm{APN},
then any rotation line at any point has at least 6 points and at most $2^m-2$ points.
\end{proposition}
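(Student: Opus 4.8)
The plan is to prove the two bounds separately. The upper bound is immediate: by the construction following Proposition~\ref{def:SE}, the rotation lines $R_1,\ldots,R_s$ at a fixed point $a$ partition the $2^m-2$ elements of $\F^m\backslash\{\zero,a\}$, so no single rotation line can contain more than $2^m-2$ points, with equality exactly when $s=rl(a)=1$, i.e. in the closed surface case.

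For the lower bound I would read a rotation line as a cycle of the permutation $\phi(y)=F(F^{-1}(a)+F^{-1}(a+y))$ of $\F^m\backslash\{\zero,a\}$ determined by the recurrence~(\ref{def:a_i+1}): a rotation line consisting of $r$ consecutive pairs $a_1,a+a_1;\ldots;a_r,a+a_r$ (hence $2r$ points) corresponds to a cycle $a_1\mapsto a_2\mapsto\cdots\mapsto a_r\mapsto a_1$ of length $r$, where $a_{i+1}=\phi(a_i)$. It therefore suffices to show that, when $F$ is \APN, no cycle has length $1$ or $2$, so that $r\geq 3$ and every rotation line has at least $6$ points. The key input is that $F$ \APN is equivalent to $\dc=5$, and since $\C_F=\cH^n\cap\pi_F(\cH^n)$ is linear, this means $\C_F$ contains no codeword of weight $3$ or $4$.

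For $r=1$, a fixed point $\phi(a_1)=a_1$ forces the white triple $(a,a+a_1,a_1)$ of $F(STS({\cal H}^n))$ to coincide, as a set, with the black triple $(a,a_1,a+a_1)$ of $STS({\cal H}^n)$; this common triple is a weight-$3$ codeword of $\C_F$, contradicting $\dc=5$. For $r=2$, write the cycle $a_1\mapsto a_2\mapsto a_1$ with $a_1\neq a_2$. I would first check that the four points $a_1,a+a_1,a_2,a+a_2$ are pairwise distinct: $a_1\neq a_2$ is the hypothesis, $a_i\neq a+a_i$ holds because $a\neq\zero$, and $\phi(y)=a+y$ would force $F^{-1}(a)=\zero$, i.e. $a=\zero$, which rules out both $a_2=a+a_1$ and $a_1=a+a_2$. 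Then the two black triples $(a,a_1,a+a_1)$ and $(a,a_2,a+a_2)$ sum, in $\cH^n$, to the weight-$4$ codeword supported on $\{a_1,a+a_1,a_2,a+a_2\}$, while the two white triples $(a,a+a_1,a_2)$ and $(a,a+a_2,a_1)$ sum, in $\pi_F(\cH^n)$, to the codeword with the very same support. Since a binary vector is determined by its support, these are one and the same vector, which therefore lies in $\cH^n\cap\pi_F(\cH^n)=\C_F$ and has weight $4$, again contradicting $\dc=5$. Hence every cycle has length $r\geq 3$, so every rotation line has at least $6$ points.

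The main obstacle is the $r=2$ case, and within it the only delicate point is verifying that the four neighbours are genuinely distinct, so that the two sums have weight exactly $4$ rather than $0$ or $2$; once this is settled, the coincidence of the black and white supports together with the linearity of the Hamming code finishes the argument.
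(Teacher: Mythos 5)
Your proof is correct and takes essentially the same approach as the paper: the upper bound comes from the partition of $\F^m\backslash\{\zero,a\}$ into rotation lines, and the lower bound from the fact that $d_{\C_F}=5$ excludes a common triple (a 2-point line) and, via linearity of $\cH^n$ and $\pi_F(\cH^n)$, a common weight-4 quadruple (a 4-point line). The only difference is that you explicitly verify that the four points $a_1, a+a_1, a_2, a+a_2$ are pairwise distinct, a detail the paper leaves implicit.
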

\begin{proof}
Note that the minimum distance of the code $\C_F={\cH}^n\cap \pi_F({\cH}^n)$ corresponding to an \textrm{APN} permutation $F$ is 5 \cite{CCZ}. Therefore, there is not any rotation line having 2 points, because there are no common triples in the Hamming codes ${\cH}^n$ and  $\pi_F({\cH}^n)$. Let us assume that there is a rotation line having 4 points for some element $a$: $[a_1,a+a_1;a_2,a+a_2]$. Then, the triples $(a,a_1,a+a_1),\, (a,a_2,a+a_2)$ belong to ${\cH}^n$ and the triples $(a,a+a_1,a_2),\, (a,a+a_2,a_1)$ belong to $\pi_F({\cH}^n).$ Since ${\cH}^n$ and  $\pi_F({\cH}^n)$ are linear codes, we obtain the common quadruple $(a_1,a+a_1,a_2,a+a_2)\in {\cH}^n\cap \pi_F({\cH}^n)=\C_F$.  Therefore, the minimum distance in $\C_F$ would be 4, which is a contradiction. Then, any rotation line at any point has at least 6 points.

The upper bound corresponds to the case when there is only one rotation line at a given point, so it has $2^m-2$ points.
\end{proof}

The lower bound given in Proposition \ref{upperlowerbound} is attainable  by the \APN permutation $F$ corresponding to the Melas code $\C_F$ for any length  $n=2^m-1$, where $m$ is odd, by Proposition \ref{Melas}. On the other hand, the upper bound corresponds to an \APN self-embedding permutation in a closed surface. These self-embeddings are pointed out in Table \ref{taulavb1} with $({\cdot })^{APN}$. Recall that they exist at least for $m\in \{3,5,7,11,17\}$, and there are none, at least for any non prime $m$
and for $m\in \{13,19\}$.

\section{Conclusions}
\label{sec:5}

We classified, up to isomorphism, all self-embedding monomial power permutations in close surfaces of the
Hamming Steiner triple system
$STS(\cH^n)$ for $m\leq 22$.
The existence of such self-embeddings and their classification for all prime $m\ge 23$ is still an open problem.
The found and classified ones are cyclic and nonorientable. The cyclicity
is proven
 for all $m$, and the nonorientability is checked only for all $m\leq 19$ using {\sc Magma}.
For $m\in \{3,5,7,11,17\}$, there exists one class of these permutations which is also \textrm{APN}, but for
$m\in \{13,19 \}$,
there is not any $\APN$ monomial power self-embedding permutations in a closed surface.

We established new invariants, $v_F$ and $V^*_F$, to distinguish CCZ-equivalent monomial power permutations.
Up to $m\leq 17$, the classification of \APN monomial power permutations, given by the self-embedding isomorphism, coincides with the
$\CCZ$-equivalence. It is still not known whether this is also true for any $m\geq 19$.
In any case, since two isomorphic self-embedding permutations are $\CCZ$-equivalent, we can use
the rotation line spectrum as a first step to obtain
a classification, up to $\CCZ$-equivalence, for any permutation not
only for monomial power permutations.

\newpage

{\Large {\bf Appendix}}
\begin{table}[htp!]
\begin{center}
{\footnotesize
$$
\begin{array}{|c|c|c|c|c|c|c|} \hline
m&C^*_t & v_F \\ \hline\hline 3 & C^*_{3} & 4^{\mbox{\tiny
APN}} \\ \hline 5 & C^*_{5} & 16^{\mbox{\tiny APN}} \\ \hline 7 &
C^*_{19} & 43 \\7 & C^*_{ 7 }, C^*_{ 21 } & 50 \\7 & C^*_{ 9 }
& 64 ^{\mbox{\tiny APN}} \\
\hline 11 & C^*_{ 39 } & 683 \\11 & C^*_{ 59 } & 694 \\11 & C^*_{
371 } & 738 \\11 & C^*_{ 181 } & 760 \\11 & C^*_{ 37 } & 771 \\11
& C^*_{ 25 } & 793 \\11 & C^*_{ 21 }, C^*_{ 687 } & 815 \\11 &
C^*_{ 73 }, C^*_{ 165 } & 826 \\11 & C^*_{ 101 } & 837 \\11 &
C^*_{ 127 } & 870 \\11 & C^*_{ 317 } & 881 \\11 & C^*_{ 107 } &
1024^{\mbox{\tiny APN}} \\\hline 13 & C^*_{ 51 } & 2887 \\13 &
C^*_{ 587 } & 3004 \\13 & C^*_{ 659 } & 3108 \\13 & C^*_{ 295 } &
3186 \\13 & C^*_{ 249 }, C^*_{ 661 } & 3199 \\13 & C^*_{ 75 } &
3251 \\13 & C^*_{ 151 } & 3316 \\13 & C^*_{ 133 }, C^*_{ 605 } &
3342 \\13 & C^*_{ 875 } & 3381 \\13 & C^*_{ 93 } & 3407\\ \hline
17 & C^*_{ 6827 } & 50457 \\17 & C^*_{ 13803 } & 50610 \\17 &
C^*_{ 5451 } & 50661 \\17 & C^*_{ 6059 } & 50678 \\17 & C^*_{ 1129
} & 50712 \\17 & C^*_{ 15691 } & 50746 \\17 & C^*_{ 8081 } & 50814
\\17 & C^*_{ 4457 }, C^*_{ 24285 } & 50865 \\17 & C^*_{ 2185 } &
50933 \\17 & C^*_{ 2387 }, C^*_{ 6705 } & 50950 \\17 & C^*_{ 1223
} & 51001 \\17 & C^*_{ 5681 } & 51018  \\ \hline\hline
\end{array}
\hspace*{1cm}
\begin{array}{|c|c|c|c|c|c|c|} \hline
m&C^*_t & v_F \\ \hline\hline 17 & C^*_{ 23987 } & 51069 \\17 &
C^*_{ 2043 } & 51154 \\17 & C^*_{ 4533 } & 51171 \\17 & C^*_{
10171 } & 51273 \\17 & C^*_{ 5003 } & 51307 \\17 & C^*_{ 249 } &
51324 \\17 & C^*_{ 2363 }, C^*_{ 11071 } & 51341 \\17 & C^*_{ 1673
}, C^*_{ 9909 } & 51358 \\17 & C^*_{ 3163 }, C^*_{ 8917 } & 51409
\\17 & C^*_{ 5965 } & 51460 \\17 & C^*_{ 11955 } & 51494 \\17 &
C^*_{ 2335 } & 51511 \\17 & C^*_{ 285 } & 51528 \\17 & C^*_{ 4285
} & 51579 \\17 & C^*_{ 4233 } & 51596 \\17 & C^*_{ 3689 }, C^*_{
4743 } & 51613 \\17 & C^*_{ 421 } & 51630 \\17 & C^*_{ 543 },
C^*_{ 7143 } & 51647 \\17 & C^*_{ 1791 }, C^*_{ 4931 }, C^*_{ 5947
} & 51715 \\17 & C^*_{ 2851 }, C^*_{ 4519 } & 51749 \\17 & C^*_{
1201 }, C^*_{ 1949 } & 51783 \\17 & C^*_{ 2621 } & 51851 \\17 &
C^*_{ 1517 } & 51868 \\17 & C^*_{ 4635 }, C^*_{ 5663 } & 51936
\\17 & C^*_{ 1891 } & 51953 \\17 & C^*_{ 1313 } & 52021 \\17 &
C^*_{ 1395 } & 52038 \\17 & C^*_{ 137 }, C^*_{ 3309 } & 52089 \\17
& C^*_{ 1001 }, C^*_{ 2979 } & 52123 \\17 & C^*_{ 3757 } & 52157
\\17 & C^*_{ 6967 } & 52480 \\17 & C^*_{ 6249 } & 52531 \\17 &
C^*_{ 1431 } & 52548 \\17 & C^*_{ 2673 } & 52599 \\17 & C^*_{ 151
} & 52616 \\17 & C^*_{ 2281 } & 52633 \\17 & C^*_{ 907 } & 52837
\\17 & C^*_{ 4499 } & 53279 \\17 & C^*_{ 257 } &
65536^{\mbox{\tiny APN}} \\  \hline\hline
\end{array}
$$}
\end{center}
\caption{\label{taulavb1} Classification of all self-embedding monomial power
permutations in closed surfaces, $F(x)=x^t$ over $\F^m$, based on the invariant $v_F$ for $m\in
\{3,\ldots,18\}$.}
\end{table}

\newpage
\begin{table}[htp!]
\begin{center}
{\footnotesize
$$
\begin{array}{|c|c|c|c|c|c|c|} \hline
m&C^*_t & v_F \\ \hline\hline 19 & C^*_{ 12895 } & 201439 \\19
& C^*_{ 7989 } & 203491 \\19 & C^*_{ 13643 } & 204365 \\19 & C^*_{
21847 } & 204631 \\19 & C^*_{ 28565 } & 204669 \\19 & C^*_{ 38283
} & 204688 \\19 & C^*_{ 50799 } & 204707 \\19 & C^*_{ 7021 } &
204726 \\19 & C^*_{ 1257 } & 204745 \\19 & C^*_{ 15003 } & 204821
\\19 & C^*_{ 6501 }, C^*_{ 37561 }, C^*_{ 59999 } & 204840 \\19 &
C^*_{ 19367 } & 204935 \\19 & C^*_{ 35373 } & 205106 \\19 & C^*_{
24533 } & 205125 \\19 & C^*_{ 24041 } & 205296 \\19 & C^*_{ 80573
} & 205334 \\19 & C^*_{ 15593 } & 205353 \\19 & C^*_{ 8487 },
C^*_{ 38045 } & 205372 \\19 & C^*_{ 28495 }, C^*_{ 64441 } &
205429 \\19 & C^*_{ 4779 } & 205524 \\19 & C^*_{ 16077 } & 205543
\\19 & C^*_{ 12661 }, C^*_{ 26441 } & 205638 \\19 & C^*_{ 4277 },
C^*_{ 23311 } & 205676 \\19 & C^*_{ 27385 } & 205733 \\19 & C^*_{
14699 } & 205809 \\19 & C^*_{ 4729 } & 205828 \\19 & C^*_{ 877 } &
205942 \\19 & C^*_{ 1211 }, C^*_{ 8871 } & 205961 \\19 & C^*_{
7011 } & 205980 \\19 & C^*_{ 62651 } & 206018 \\19 & C^*_{ 15449
}, C^*_{ 56575 } & 206075 \\19 & C^*_{ 38891 } & 206113 \\19 &
C^*_{ 10475 } & 206132 \\19 & C^*_{ 12213 }, C^*_{ 18267 } &
206227 \\19 & C^*_{ 54003 } & 206265 \\19 & C^*_{ 35571 } & 206284
\\ \hline\hline
\end{array}
\hspace*{1cm}
\begin{array}{|c|c|c|c|c|c|c|} \hline
m&C^*_t & v_F \\ \hline\hline 19 & C^*_{ 2391 }, C^*_{ 26219 },
C^*_{ 32479 } & 206341 \\19 & C^*_{ 2987 }, C^*_{ 31923 } & 206398
\\19 & C^*_{ 42579 } & 206436 \\19 & C^*_{ 22475 } & 206512 \\19 &
C^*_{ 11513 } & 206531 \\19 & C^*_{ 11039 }, C^*_{ 12447 } &
206550 \\19 & C^*_{ 1531 }, C^*_{ 46503 } & 206569 \\19 & C^*_{
7147 } & 206607 \\19 & C^*_{ 13127 }, C^*_{ 17629 } & 206797 \\19
& C^*_{ 13225 } & 206816 \\19 & C^*_{ 10633 } & 206835 \\19 &
C^*_{ 42213 } & 206949 \\19 & C^*_{ 235 } & 206968 \\19 & C^*_{
28461 } & 207006 \\19 & C^*_{ 1275 } & 207025 \\19 & C^*_{ 30917 }
& 207158 \\19 & C^*_{ 4665 } & 207177 \\19 & C^*_{ 32359 }, C^*_{
62927 } & 207234 \\19 & C^*_{ 7769 } & 207253 \\19 & C^*_{ 48967 }
& 207291 \\19 & C^*_{ 58295 } & 207310 \\19 & C^*_{ 16949 } &
207405 \\19 & C^*_{ 38521 } & 207804 \\19 & C^*_{ 9515 } & 207861
\\19 & C^*_{ 9539 } & 207880 \\19 & C^*_{ 30677 } & 207956 \\19 &
C^*_{ 4369 } & 208013 \\19 & C^*_{ 47463 } & 208051 \\19 & C^*_{
3337 } & 208070 \\19 & C^*_{ 5057 } & 208146 \\19 & C^*_{ 7241 } &
208241 \\19 & C^*_{ 23803 } & 208355 \\19 & C^*_{ 9785 } & 208678
\\19 & C^*_{ 503 } & 208925 \\19 & C^*_{ 20657 } & 208963 \\19 &
C^*_{ 28201 } & 209837 \\ \hline\hline
\end{array}
$$}
\caption{\label{taulavb2} Classification of all self-embedding monomial power
permutations in  closed surfaces, $F(x)=x^t$ over $F^m$, based on the invariant $v_F$ for $m=19$.}
\end{center}
\end{table}

\begin{table}[htp!]
\begin{center}
{\footnotesize
$$
\begin{array}{|c|c|c|l|} \hline
m&C^*_t & v_F & \hspace*{3cm}V^*_F \\
\hline
7 & C^*_{ 7 } & 50 & \{ 1^\wedge{42}, 3^\wedge{7 } \}
\\
7 & C^*_{ 21 } & 50 & \{ 1^\wedge{42}, 3^\wedge{7 } \}
\\
\hline
11 & C^*_{ 21 } & 815 & \{ 1^\wedge{627}, 2^\wedge{165}, 3^\wedge{22 } \}
\\
11 & C^*_{ 687 } & 815 & \{ 1^\wedge{627}, 2^\wedge{165}, 3^\wedge{22 } \}
\\
11 & C^*_{ 73 } & 826 & \{ 1^\wedge{660}, 2^\wedge{132}, 3^\wedge{33 } \}
\\
11 & C^*_{ 165 } & 826 & \{ 1^\wedge{682}, 2^\wedge{99}, 3^\wedge{33}, 4^\wedge{11 } \}
\\
\hline
13 & C^*_{ 249 } & 3199 & \{ 1^\wedge{2444}, 2^\wedge{624}, 3^\wedge{117}, 4^\wedge{13 } \}
\\
13 & C^*_{ 661 } & 3199 & \{ 1^\wedge{2470}, 2^\wedge{585}, 3^\wedge{117}, 4^\wedge{26 } \}
\\
13 & C^*_{ 133 } & 3342 & \{ 1^\wedge{2691}, 2^\wedge{546}, 3^\wedge{104 } \}
\\
13 & C^*_{ 605 } & 3342 & \{ 1^\wedge{2678}, 2^\wedge{585}, 3^\wedge{65}, 4^\wedge{13 } \}
\\
\hline
17 & C^*_{ 4457 } & 50865 & \{ 1^\wedge{38675}, 2^\wedge{10132}, 3^\wedge{1700}, 4^\wedge{289}, 5^\wedge{68 } \}
\\
17 & C^*_{ 24285 } & 50865 & \{ 1^\wedge{38692}, 2^\wedge{10013}, 3^\wedge{1853}, 4^\wedge{272}, 5^\wedge{34 } \}
\\
17 & C^*_{ 2387 } & 50950 & \{ 1^\wedge{38692}, 2^\wedge{10183}, 3^\wedge{1836}, 4^\wedge{221}, 5^\wedge{17 } \}
\\
17 & C^*_{ 6705 } & 50950 & \{ 1^\wedge{38352}, 2^\wedge{10829}, 3^\wedge{1581}, 4^\wedge{153}, 5^\wedge{34 } \}
\\
17 & C^*_{ 2363 } & 51341 & \{ 1^\wedge{39202}, 2^\wedge{10234}, 3^\wedge{1768}, 4^\wedge{119}, 5^\wedge{17 } \}
\\
17 & C^*_{ 11071 } & 51341 & \{ 1^\wedge{39304}, 2^\wedge{10166}, 3^\wedge{1615}, 4^\wedge{221}, 5^\wedge{34 } \}
\\
17 & C^*_{ 1673 } & 51358 & \{ 1^\wedge{39304}, 2^\wedge{10149}, 3^\wedge{1683}, 4^\wedge{221 } \}
\\
17 & C^*_{ 9909 } & 51358 & \{ 1^\wedge{39576}, 2^\wedge{9707}, 3^\wedge{1785}, 4^\wedge{255}, 5^\wedge{34 } \}
\\
17 & C^*_{ 3163 } & 51409 & \{ 1^\wedge{39168}, 2^\wedge{10591}, 3^\wedge{1462}, 4^\wedge{153}, 5^\wedge{17}, 6^\wedge{17
} \}
\\
17 & C^*_{ 8917 } & 51409 & \{ 1^\wedge{39423}, 2^\wedge{10149}, 3^\wedge{1581}, 4^\wedge{204}, 5^\wedge{51 } \}
\\
17 & C^*_{ 3689 } & 51613 & \{ 1^\wedge{40171}, 2^\wedge{9265}, 3^\wedge{1921}, 4^\wedge{221}, 5^\wedge{17}, 6^\wedge{17
} \}
\\
17 & C^*_{ 4743 } & 51613 & \{ 1^\wedge{39933}, 2^\wedge{9656}, 3^\wedge{1836}, 4^\wedge{153}, 5^\wedge{34 } \}
\\
17 & C^*_{ 543 } & 51647 & \{ 1^\wedge{39848}, 2^\wedge{9996}, 3^\wedge{1564}, 4^\wedge{187}, 5^\wedge{51 } \}
\\
17 & C^*_{ 7143 } & 51647 & \{ 1^\wedge{39848}, 2^\wedge{9877}, 3^\wedge{1768}, 4^\wedge{136}, 5^\wedge{17 } \}
\\
17 & C^*_{ 1791 } & 51715 & \{ 1^\wedge{39882}, 2^\wedge{10030}, 3^\wedge{1632}, 4^\wedge{153}, 5^\wedge{17 } \}
\\
17 & C^*_{ 4931 } & 51715 & \{ 1^\wedge{39916}, 2^\wedge{9962}, 3^\wedge{1683}, 4^\wedge{119}, 5^\wedge{34 } \}
\\
17 & C^*_{ 5947 } & 51715 & \{ 1^\wedge{39848}, 2^\wedge{10149}, 3^\wedge{1479}, 4^\wedge{238 } \}
\\
17 & C^*_{ 2851 } & 51749 & \{ 1^\wedge{39916}, 2^\wedge{10064}, 3^\wedge{1615}, 4^\wedge{119}, 5^\wedge{34 } \}
\\
17 & C^*_{ 4519 } & 51749 & \{ 1^\wedge{40290}, 2^\wedge{9367}, 3^\wedge{1870}, 4^\wedge{204}, 5^\wedge{17 } \}
\\
17 & C^*_{ 1201 } & 51783 & \{ 1^\wedge{40052}, 2^\wedge{9945}, 3^\wedge{1598}, 4^\wedge{136}, 5^\wedge{51 } \}
\\
17 & C^*_{ 1949 } & 51783 & \{ 1^\wedge{40307}, 2^\wedge{9520}, 3^\wedge{1700}, 4^\wedge{187}, 5^\wedge{68 } \}
\\
17 & C^*_{ 4635 } & 51936 & \{ 1^\wedge{40358}, 2^\wedge{9690}, 3^\wedge{1751}, 4^\wedge{136 } \}
\\
17 & C^*_{ 5663 } & 51936 & \{ 1^\wedge{40392}, 2^\wedge{9724}, 3^\wedge{1598}, 4^\wedge{204}, 5^\wedge{17 } \}
\\
17 & C^*_{ 137 } & 52089 & \{ 1^\wedge{40562}, 2^\wedge{9962}, 3^\wedge{1292}, 4^\wedge{187}, 5^\wedge{85 } \}
\\
17 & C^*_{ 3309 } & 52089 & \{ 1^\wedge{40596}, 2^\wedge{9741}, 3^\wedge{1547}, 4^\wedge{204 } \}
\\
17 & C^*_{ 1001 } & 52123 & \{ 1^\wedge{40851}, 2^\wedge{9418}, 3^\wedge{1615}, 4^\wedge{204}, 5^\wedge{17}, 6^\wedge{17
} \}
\\
17 & C^*_{ 2979 } & 52123 & \{ 1^\wedge{40783}, 2^\wedge{9452}, 3^\wedge{1734}, 4^\wedge{119}, 5^\wedge{34 } \}
\\
\hline
\end{array}
$$}
\end{center}
\caption{\label{taula_multiplicities1} Classification of some self-embedding monomial power
permutations in closed surfaces, $F(x)=x^t$ over $\F^m$, based on the invariants $v_F$ and $V^*_F$ for $m\in \{7,11,13,17\}$.}
\end{table}

\begin{table}[htp!]
\begin{center}
{\footnotesize
$$
\begin{array}{|c|c|c|l|} \hline
m&C^*_t & v_F & \hspace*{3cm}V^*_F \\
\hline
19 & C^*_{ 6501 } & 204840 & \{ 1^\wedge{157377}, 2^\wedge{38779}, 3^\wedge{7676}, 4^\wedge{855}, 5^\wedge{152 } \}
\\
19 & C^*_{ 37561 } & 204840 & \{ 1^\wedge{156522}, 2^\wedge{40527}, 3^\wedge{6764}, 4^\wedge{893}, 5^\wedge{114},
7^\wedge{19 } \}
\\
19 & C^*_{ 59999 } & 204840 & \{ 1^\wedge{156503}, 2^\wedge{40565}, 3^\wedge{6764}, 4^\wedge{817}, 5^\wedge{190 } \}
\\
19 & C^*_{ 8487 } & 205372 & \{ 1^\wedge{157206}, 2^\wedge{40584}, 3^\wedge{6726}, 4^\wedge{722}, 5^\wedge{114},
7^\wedge{19 } \}
\\
19 & C^*_{ 38045 } & 205372 & \{ 1^\wedge{157567}, 2^\wedge{39748}, 3^\wedge{7182}, 4^\wedge{836}, 5^\wedge{38 } \}
\\
19 & C^*_{ 28495 } & 205429 & \{ 1^\wedge{157719}, 2^\wedge{39881}, 3^\wedge{6783}, 4^\wedge{912}, 5^\wedge{133 } \}
\\
19 & C^*_{ 64441 } & 205429 & \{ 1^\wedge{157529}, 2^\wedge{40318}, 3^\wedge{6479}, 4^\wedge{988}, 5^\wedge{95},
6^\wedge{19 } \}
\\
19 & C^*_{ 12661 } & 205638 & \{ 1^\wedge{158669}, 2^\wedge{38969}, 3^\wedge{6745}, 4^\wedge{988}, 5^\wedge{247},
6^\wedge{19 } \}
\\
19 & C^*_{ 26441 } & 205638 & \{ 1^\wedge{157833}, 2^\wedge{40147}, 3^\wedge{6707}, 4^\wedge{855}, 5^\wedge{95 } \}
\\
19 & C^*_{ 4277 } & 205676 & \{ 1^\wedge{157814}, 2^\wedge{40109}, 3^\wedge{7011}, 4^\wedge{646}, 5^\wedge{76},
6^\wedge{19 } \}
\\
19 & C^*_{ 23311 } & 205676 & \{ 1^\wedge{158042}, 2^\wedge{39881}, 3^\wedge{6745}, 4^\wedge{931}, 5^\wedge{76 } \}
\\
19 & C^*_{ 1211 } & 205961 & \{ 1^\wedge{158004}, 2^\wedge{40831}, 3^\wedge{6061}, 4^\wedge{1026}, 5^\wedge{38 } \}
\\
19 & C^*_{ 8871 } & 205961 & \{ 1^\wedge{158327}, 2^\wedge{40242}, 3^\wedge{6346}, 4^\wedge{931}, 5^\wedge{114 } \}
\\
19 & C^*_{ 15449 } & 206075 & \{ 1^\wedge{158612}, 2^\wedge{39843}, 3^\wedge{6745}, 4^\wedge{760}, 5^\wedge{114 } \}
\\
19 & C^*_{ 56575 } & 206075 & \{ 1^\wedge{158916}, 2^\wedge{39444}, 3^\wedge{6574}, 4^\wedge{1083}, 5^\wedge{57 } \}
\\
19 & C^*_{ 12213 } & 206227 & \{ 1^\wedge{158878}, 2^\wedge{39653}, 3^\wedge{6840}, 4^\wedge{836}, 5^\wedge{19 } \}
\\
19 & C^*_{ 18267 } & 206227 & \{ 1^\wedge{158840}, 2^\wedge{39862}, 3^\wedge{6707}, 4^\wedge{646}, 5^\wedge{152},
6^\wedge{19 } \}
\\
19 & C^*_{ 2391 } & 206341 & \{ 1^\wedge{159315}, 2^\wedge{39216}, 3^\wedge{6954}, 4^\wedge{741}, 5^\wedge{114 } \}
\\
19 & C^*_{ 26219 } & 206341 & \{ 1^\wedge{158764}, 2^\wedge{40470}, 3^\wedge{6080}, 4^\wedge{950}, 5^\wedge{57},
6^\wedge{19 } \}
\\
19 & C^*_{ 32479 } & 206341 & \{ 1^\wedge{158726}, 2^\wedge{40147}, 3^\wedge{6783}, 4^\wedge{646}, 5^\wedge{38 } \}
\\
19 & C^*_{ 2987 } & 206398 & \{ 1^\wedge{158479}, 2^\wedge{40793}, 3^\wedge{6498}, 4^\wedge{551}, 5^\wedge{76 } \}
\\
19 & C^*_{ 31923 } & 206398 & \{ 1^\wedge{159011}, 2^\wedge{39881}, 3^\wedge{6669}, 4^\wedge{817}, 5^\wedge{19 } \}
\\
19 & C^*_{ 11039 } & 206550 & \{ 1^\wedge{158707}, 2^\wedge{40983}, 3^\wedge{6023}, 4^\wedge{779}, 5^\wedge{57 } \}
\\
19 & C^*_{ 12447 } & 206550 & \{ 1^\wedge{159600}, 2^\wedge{39235}, 3^\wedge{6859}, 4^\wedge{798}, 5^\wedge{38},
6^\wedge{19 } \}
\\
19 & C^*_{ 1531 } & 206569 & \{ 1^\wedge{159011}, 2^\wedge{40641}, 3^\wedge{5909}, 4^\wedge{931}, 5^\wedge{57},
6^\wedge{19 } \}
\\
19 & C^*_{ 46503 } & 206569 & \{ 1^\wedge{159068}, 2^\wedge{40432}, 3^\wedge{6099}, 4^\wedge{931}, 5^\wedge{38 } \}
\\
19 & C^*_{ 13127 } & 206797 & \{ 1^\wedge{159752}, 2^\wedge{39919}, 3^\wedge{6099}, 4^\wedge{874}, 5^\wedge{152 } \}
\\
19 & C^*_{ 17629 } & 206797 & \{ 1^\wedge{159790}, 2^\wedge{39672}, 3^\wedge{6441}, 4^\wedge{779}, 5^\wedge{114 } \}
\\
19 & C^*_{ 32359 } & 207234 & \{ 1^\wedge{160037}, 2^\wedge{40413}, 3^\wedge{5947}, 4^\wedge{760}, 5^\wedge{57},
6^\wedge{19 } \}
\\
19 & C^*_{ 62927 } & 207234 & \{ 1^\wedge{160531}, 2^\wedge{39254}, 3^\wedge{6745}, 4^\wedge{665}, 5^\wedge{19},
6^\wedge{19 } \}
\\ \hline
\end{array}
$$}
\end{center}
\caption{\label{taula_multiplicities2} Classification of some self-embedding monomial power
permutations in closed surfaces, $F(x)=x^t$ over $\F^m$, based on the invariants $v_F$ and $V^*_F$ for $m=19$.}
\end{table}

\begin{table}[ht]
\begin{center}
{\footnotesize

$$
\begin{array}{|c|l|c|l|} \hline
m & C^*_t & rl(1) & \mbox{\hspace*{1.8cm}{reduced rotation line spectrum at point 1}}    \\ \hline
\hline
3 & C^*_{ 3 } & 1 & ( 1 ; 6 ) \\
\hline
5 & C^*_{ 5 } & 1 & ( 1 ; 30 ) \\
5 & C^*_{ 3 } & 2 & ( 2 ; 10, 20 ) \\
5 & C^*_{ 15 } & 5 & ( 5 ; 6 ) \\
\hline
7 & C^*_{ 9 } & 1 & ( 1 ; 126 ) \\
7 & C^*_{ 5 } & 2 & ( 2 ; 28, 98 ) \\
7 & C^*_{ 3 } & 4 & ( 4 ; 14, 28, 42 ) \\
7 & C^*_{ 23 } & 4 & ( 4 ; 14, 84 ) \\
7 & C^*_{ 11 } & 15 & ( 15 ; 6, 10, 14 ) \\
7 & C^*_{ 63 } & 21 & ( 21 ; 6 ) \\
\hline
9 & C^*_{ 47 } & 3 & ( 3 ; 6, 234, 270 ) \\
9 & C^*_{ 5 } & 5 & ( 5 ; 6, 120, 144 ) \\
9 & C^*_{ 13 } & 5 & ( 5 ; 6, 54, 126, 270 ) \\
9 & C^*_{ 17 } & 5 & ( 5 ; 6, 72, 144 ) \\
9 & C^*_{ 3 } & 10 & ( 10 ; 6, 24, 36, 54, 72, 90 ) \\
9 & C^*_{ 19 } & 14 & ( 14 ; 6, 18, 36, 40, 54 ) \\
9 & C^*_{ 255 } & 85 & ( 85 ; 6 ) \\
\hline
11 & C^*_{ 107 } & 1 & ( 1 ; 2046 ) \\
11 & C^*_{ 35 } & 3 & ( 3 ; 264, 682, 1100 ) \\
11 & C^*_{ 95 } & 4 & ( 4 ; 22, 374, 1276 ) \\
11 & C^*_{ 5 } & 6 & ( 6 ; 22, 88, 132, 396, 462, 946 ) \\
11 & C^*_{ 57 } & 6 & ( 6 ; 22, 44, 66, 88, 440, 1386 ) \\
11 & C^*_{ 17 } & 8 & ( 8 ; 22, 66, 110, 132, 154, 264, 396, 902 ) \\
11 & C^*_{ 9 } & 13 & ( 13 ; 22, 136, 528 ) \\
11 & C^*_{ 33 } & 13 & ( 13 ; 22, 88, 176 ) \\
11 & C^*_{ 13 } & 14 & ( 14 ; 88, 112, 176, 550 ) \\
11 & C^*_{ 3 } & 18 & ( 18 ; 22, 44, 66, 88, 110, 132, 154, 176, 198, 242 ) \\
11 & C^*_{ 43 } & 19 & ( 19 ; 18, 44, 66, 88, 308, 330, 396, 550 ) \\
11 & C^*_{ 1023 } & 341 & ( 341 ; 6 ) \\
\hline
13 & C^*_{ 71 } & 3 & ( 3 ; 312, 364, 7514 ) \\
13 & C^*_{ 9 } & 3 & ( 3 ; 26, 338, 7826 ) \\
13 & C^*_{ 67 } & 3 & ( 3 ; 104, 7982 ) \\
13 & C^*_{ 171 } & 3 & ( 3 ; 26, 2002, 6162 ) \\
13 & C^*_{ 5 } & 5 & ( 5 ; 156, 234, 338, 806, 6656 ) \\
13 & C^*_{ 287 } & 6 & ( 6 ; 26, 156, 390, 754, 2496, 4368 ) \\
13 & C^*_{ 33 } & 6 & ( 6 ; 26, 78, 338, 1196, 6474 ) \\
13 & C^*_{ 191 } & 6 & ( 6 ; 26, 52, 286, 3302, 4472 ) \\
13 & C^*_{ 17 } & 8 & ( 8 ; 26, 52, 78, 806, 1976, 2262, 2964 ) \\
13 & C^*_{ 13 } & 9 & ( 9 ; 26, 52, 78, 156, 624, 3536, 3666 ) \\
13 & C^*_{ 65 } & 13 & ( 13 ; 630 ) \\
13 & C^*_{ 57 } & 18 & ( 18 ; 16, 52, 130, 234, 260, 7306 ) \\
13 & C^*_{ 3 } & 52 & ( 52 ; 26, 32, 78, 104, 130, 156, 182, 208, 234, 260,
286, 312, 338, 364, 468 ) \\
13 & C^*_{ 4095 } & 1365 & ( 1365 ; 6 ) \\
\hline\hline
\end{array}
$$
}
\caption{\label{allPowerPermM13} Classification of all \APN monomial
power permutations for $m\leq 13$ using the invariant given by the
rotation line spectrum.}
\end{center}
\end{table}
\newpage

\begin{table}[ht]
\begin{center}
\footnotesize{
$$
\begin{array}{|c|l|c|l|} \hline
m & C^*_t & rl(1) & \mbox{\hspace*{1.8cm}{reduced rotation line spectrum at point 1}}    \\ \hline
\hline
15& C^*_{131}   &  10  &   (10;6, 30, 2170, 8720) \\
15   &  C^*_{241}   &   15  &  (15;6, 10, 20, 180, 380, 1500, 2330, 22860) \\
15& C^*_{13}   &   16  &   (16;6, 10, 20, 36, 90, 720,10560) \\
15& C^*_{1371}   &   16 &   (16;6, 30, 210, 288, 430, 750,1230, 4770, 5490,17550) \\
15& C^*_{383}  &   25  &   (25;6, 10, 20, 158, 180, 330, 900, 2530, 21360) \\
15& C^*_5   &   30  &   (30;6, 30, 70, 108, 208, 9600) \\
15& C^*_{17}   &   47  &   (47;6, 10, 20, 30, 32,306, 430, 1040, 2640, 2700, 3750, 11700) \\
15& C^*_{129}  &   117  &   (117;6, 30, 60, 150, 300) \\
15& C^*_3   &   260  &   (260; 6, 10, 20, 22, 36, 40,50, 60, 66, 70, 72, 78, 80,90, 110, 120, 130, 140, 150,  \\
&&&   160, 180,200, 210, 240, 250, 260, 300, 330, 350,360, 390, 420, 450, 480,  \\
&&&   510, 540, 570,600, 660, 690, 720, 750, 1020, 1050,1110) \\
15& C^*_{3657}   &   341  &   (341;6, 10, 12, 16, 18, 20, 22,82, 86, 184, 220,264, 278, 364, 384, 462) \\
  15   &  C^*_{16383}  & 5461 &  (5461;6 )  \\ \hline

  17   &  C^*_{257}   &   1  &   (1;131070) \\
 17   &  C^*_{65}   &   4  &   (4;170, 680, 4386, 125834) \\
  17   &  C^*_{271}   &   6  &   (6;34, 102, 306, 4420, 24684, 101524) \\
  17   &  C^*_{9}   &   9  &   (9;34, 102, 238,544, 850, 1632, 11798, 28186, 87686) \\
   17   &  C^*_{683}   &   9  &   (9;272, 748, 1156, 2720, 5746, 9656, 11288, 15878,83606) \\
  17   &  C^*_{1151}   &   10  &   (10;68, 714, 1224, 4522, 4828,4964, 6086, 45934,57902) \\
 17   &  C^*_{33}   &   12  &   (12;102, 238, 306, 476, 646, 1972,2550, 3298, 6018, 17850, 41956, 55658) \\
  17  &  C^*_{129}   &   12  &   (12;68, 102, 136, 272, 374, 1972, 15096,16082, 20876, 24174, 51816) \\
  17   &  C^*_{13}   &  14  &   (14;34, 68, 204, 714, 884, 1394, 2380, 6936, 12580, 12988,18904, 26486, 47464) \\
  17   &  C^*_{259}   &   21 &   (21;34, 204, 1122, 7628) \\
  17   &  C^*_{767}   &   24  &   (24;68, 510, 718, 1632, 3468, 5882, 29614, 77690) \\
  17   &  C^*_{5}   &   25  &   (25;34, 68, 306, 11152, 16830,35360, 66708) \\
  17   &  C^*_{57}   &  25  &   (25;68, 136, 646, 816, 850, 1156, 3318, 3332, 67660) \\
  17   &  C^*_{241}   &   26  &  (26;34, 90, 510, 748, 1700, 1734, 24582, 24684, 75514) \\
  17   &  C^*_{993}   &   40  &   (40;34, 128, 136, 190, 340, 8806, 116212) \\
  17   &  C^*_{17}   &   59  &   (59;34,160, 204, 442, 1384, 1462, 1632, 3022, 5542, 17272, 26860) \\
  17   & C^*_{3}   &   388  &   (388;34, 36, 44, 50, 54, 64, 68, 102, 104,136, 170,204, 238, 272, 306, 340, \\
&&&   408, 442, 476, 510, 544, 578, 612, 646, 680, 748, 782,816, 850, 884,  \\
&&&  918, 952, 1020, 1054, 1088,1156, 1190, 1224, 1292, 1326, 1360, \\
&&&  1394, 1462, 1496, 1564, 1598, 1768, 1904, 1972, 2006) \\
 17   &   C^*_{65535}  & 21845&  (21845;6) \\ \hline\hline
\end{array}
$$
}
\caption{\label{allPowerPermM15-17} Classification of all \APN monomial
power permutations for $m\in \{15,17\}$ using the invariant given
by the rotation line spectrum.}
\end{center}
\end{table}

\end{document}